\DeclareMathAlphabet{\mathpzc}{OT1}{pzc}{m}{it}
\def\BibTeX{{\rm B\kern-.05em{\sc i\kern-.025em b}\kern-.08em
    N\kern-.1667em\lower.7ex\hbox{E}\kern-.125emX}}
\newcommand{\code}[1][k]{\boldsymbol{b}^{(#1)}}
\newcommand{\coeff}[1][]{
\ifthenelse{\isempty{#1}}
{\boldsymbol{\gamma}}
{\boldsymbol{\gamma}^{(#1)}}}
\newcommand{\de}[1][]{
\ifthenelse{\isempty{#1}}
{\Delta}{\Delta^{(#1)}}}
\newcommand{\pde}[1][]{
\ifthenelse{\isempty{#1}}
{\widetilde{\Delta}}{\widetilde{\Delta}^{(#1)}}}
\newcommand{\hde}[1][]{
\ifthenelse{\isempty{#1}}
{\widehat{\Delta}}{\widehat{\Delta}^{(#1)}}}
\newcommand{\mse}[1][]{
\ifthenelse{\isempty{#1}}
{J}
{J^{(#1)}}}
\newcommand{\quanterr}[1][]{
\ifthenelse{\isempty{#1}}
{\boldsymbol{\zeta}}
{\boldsymbol{\zeta}^{(#1)}}}
\newcommand{\grad}[1][k]{\boldsymbol{g}_m^{(#1)}}
\newcommand{\stograd}[1][k]{\tilde{\boldsymbol{g}}_m^{(#1)}}
\newcommand{\z}{\boldsymbol{z}}
\newcommand{\vecA}{\boldsymbol{a}}
\newcommand{\vecB}{\boldsymbol{b}}
\newcommand{\firmoment}[1][]{
\ifthenelse{\isempty{#1}}
{\boldsymbol{u}}
{\boldsymbol{u}^{(#1)}}}
\newcommand{\firmom}[1][]{
\ifthenelse{\isempty{#1}}
{u}
{u^{(#1)}}}
\newcommand{\secmoment}[1][]{
\ifthenelse{\isempty{#1}}
{\boldsymbol{v}}
{\boldsymbol{v}^{(#1)}}}
\newcommand{\secmom}[1][]{
\ifthenelse{\isempty{#1}}
{v}
{v^{(#1)}}}
\newcommand{\denom}[1][]{
\ifthenelse{\isempty{#1}}
{\boldsymbol{\nu}}
{\boldsymbol{\nu}^{(#1)}}}
\newcommand{\den}[1][]{
\ifthenelse{\isempty{#1}}
{\nu}
{\nu^{(#1)}}}
\newcommand{\hw}[1][]{
\ifthenelse{\isempty{#1}}
{\mathbf{w}}
{\widehat{\mathbf{w}}^{(#1)}}}
\newcommand{\eps}[1][]{
\ifthenelse{\isempty{#1}}
{\boldsymbol{\varepsilon}}
{\boldsymbol{\varepsilon}^{(#1)}}}
\newcommand{\epsidx}[1][]{
\ifthenelse{\isempty{#1}}
{\varepsilon^{(t)}}
{\varepsilon^{(t)}_{#1}}}
\newcommand{\qw}[1][]{
\ifthenelse{\isempty{#1}}
{\mathbf{w}}
{\widetilde{\mathbf{w}}^{(#1)}}}
\newcommand{\pw}[1][]{
\ifthenelse{\isempty{#1}}
{\mathbf{w}}
{\widetilde{\mathbf{w}}^{(#1)}}}
\newcommand{\res}[1][]{  
\ifthenelse{\isempty{#1}}{\mathbf{e}}{\mathbf{e}^{(#1)}}}
\newcommand{\hres}[1][k]{\hat{\mathbf{e}}^{(#1)}}
\newcommand{\w}[1][]{
\ifthenelse{\isempty{#1}}
{\mathbf{w}}
{\mathbf{w}^{(#1)}}}
\newcommand{\enc}{\operatorname{Enc}}
\newcommand{\dec}{\operatorname{Dec}}
\newcommand*{\diff}{\mathop{}\!\mathrm{d}}
\newcommand{\eenc}{\operatorname{entropy\_enc}}
\newcommand{\edec}{\operatorname{entropy\_dec}}
\newcommand{\expect}[1][]{
\ifthenelse{\isempty{#1}}
{\mathbb{E}}
{\mathbb{E}\left[#1\right]}}
\newcommand{\innerprod}[2]
{\left\langle #1,\,#2 \right\rangle}
\newcommand{\innprod}[2]
{\Big\langle #1,\,#2 \Big\rangle}
\newcommand{\normsq}[1]
{\big\| #1\big\|_2^2}
\newcommand{\normSQ}[1]
{\left\| #1\right\|_2^2}
\newcommand{\order}[1]{\mathcal{O}\left( #1 \right)}
\newcommand{\pred}{\operatorname{pred}}
\newcommand{\prob}{\mathbb{P}}
\newcommand{\pdist}[1][m]{\mathcal{P}_{m}}
\newcommand{\quant}{\operatorname{quant}}
\newcommand{\dequant}{\operatorname{dequant}}
\newcommand{\stoquant}{Q_{\text{s}}}
\newcommand{\uniformquant}{Q_{\text{u}}}
\newcommand{\univar}{\varphi^{\textrm{u}}}
\newcommand{\stovar}{\varphi^{\textrm{s}}}
\newcommand{\sym}{\boldsymbol{h}}
\newcommand{\sign}{\operatorname{sign}}
\newcommand{\cirone}
{\text{\ding{172}}}
\newcommand{\cirtwo}
{\text{\ding{173}}}
\newcommand{\cirthree}
{\text{\ding{174}}}
\newcommand{\cirfour}
{\text{\ding{175}}}
\newcommand{\intab}[2][0.75]{\scalebox{#1}{\textrm{#2}}}
\newenvironment{proof}[1][]{
\ifthenelse{\isempty{#1}}
{\par\vspace*{-1mm}\noindent\textit{Proof.} }
{\par\vspace*{-2mm}\noindent\textit{Proof of #1.} }}
{\hfill$\square$ \vspace*{2mm}}
\newtheorem{Assumption}{Assumption}
\newtheorem{Theorem}{Theorem}
\newtheorem{Lemma}{Lemma}
\newtheorem{Remark}{Remark}
\newcommand{\highlight}[1]{\vspace{1mm}\noindent{}\textbf{#1}\hspace{3mm}}
\newcommand{\papertheorem}[1]{Theorem~\ref{#1} }
\newcommand{\paperfig}[1]{Fig.~\ref{#1}}
\newcommand{\papertab}[1]{TABLE~\ref{#1}}
\newcommand{\paperApp}[1]{Appendix~\ref{#1}}
\begin{document}
\title{Communication-Efficient Federated Learning \\ via Predictive Coding}
\author{Kai Yue, \textit{Graduate Student Member, IEEE}, Richeng Jin, \textit{Member, IEEE}, \\ Chau-Wai Wong, \textit{Member, IEEE}, and Huaiyu Dai, \textit{Fellow, IEEE}
\thanks{Manuscript received July 31, 2021; revised December 15, 2021; accepted Jan 4, 2022.
The guest editor coordinating the
review of this manuscript and approving it for publication was Prof. Zhu Han. }}

\markboth{IEEE JOURNAL OF SELECTED TOPICS IN SIGNAL PROCESSING}%
{Communication-Efficient Federated Learning via Predictive Coding}

\maketitle
\makeatletter{\renewcommand*{\@makefnmark}{}
\footnotetext{The authors are with the Department of Electrical and Computer Engineering, NC State University, Raleigh, NC 27695 USA (e-mail: \{kyue, rjin2, chauwai.wong, hdai\}@ncsu.edu). }
\makeatother}

\begin{abstract}
Federated learning can enable remote workers to collaboratively train a shared machine learning model while allowing training data to be kept locally.
In the use case of wireless mobile devices, the communication overhead is a critical bottleneck due to limited power and bandwidth. 
Prior work has utilized various data compression tools such as quantization and sparsification to reduce the overhead. 
In this paper, we propose a predictive coding based compression scheme for federated learning. 
The scheme has shared prediction functions among all devices and allows each worker to transmit a compressed residual vector derived from the reference.
In each communication round, we select the predictor and quantizer based on the rate--distortion cost, and further reduce the redundancy with entropy coding.
Extensive simulations reveal that the communication cost can be reduced up to 99\% with even better learning performance when compared with other baseline methods.
\end{abstract}

\begin{IEEEkeywords}
Federated Learning, Distributed Optimization, Predictive Coding
\end{IEEEkeywords}

\section{Introduction}
Machine learning has achieved unprecedented success in recent years with the availability of big data and increased computational power. 
One notable example is deep learning, which uses neural networks of a large number of hidden layers and parameters efficiently trained on enormous labeled data. 
To leverage distributed training sets available at edge devices while simultaneously protecting the privacy of their data, a new paradigm named federated learning  has been developed~\cite{mcmahan2017communication, kairouz2019advances}. 
In federated learning, multiple workers communicate with a server and solve a machine learning task under its coordination. 
This approach allows training a joint model collaboratively without the need to share private data among edge devices.

One major challenge in federated learning is the expensive communication cost. 
Commonly used optimization methods such as the stochastic gradient descent (SGD) require many rounds of communication within the distributed network. 
Considering the large number of parameters of modern neural networks and the limited resources of edge devices, the recurrent burden of communication can be a primary impediment if the joint training does not converge within a short time frame. 
For instance, the well-known bidirectional encoder representations from transformers (BERT) model contains $110$ million parameters~\cite{devlin2019bert}. 
If such a model is trained via federated learning, the accumulated communication costs will become formidable before the training terminates.

Various methods have been proposed to lower communication overhead in federated learning.
One direction is to reduce the data volume of model update in each communication round by quantizing~\cite{alistarh2017qsgd} or sparsifying the gradient~\cite{sattler2019robust}. 
Another direction is to reduce the number of communication rounds by using periodic communication strategies~\cite{mcmahan2017communication, stich2019local}. 
A popular implementation is federated averaging (FedAvg)~\cite{mcmahan2017communication}, where the broadcasted model from the server will be locally updated on workers for a couple of iterations between successive communication rounds. 
However, directly quantizing the model weights or gradients may not provide the best trade-off between communication efficiency and model utility. 
In this paper, we propose a predictive coding based compression scheme to diversify the designs of communication-efficient federated learning algorithms.
Specifically, we utilize shared prediction functions among devices to decorrelate the successive updates via predictive coding. 
The resulting residual vectors are subsequently quantized and entropy coded to save the uplink cost.
Our contributions are summarized as follows.
\begin{enumerate}[itemindent=2ex]
    \item We design a compression scheme for FedAvg-type algorithms~\cite{mcmahan2017communication}. 
          Compared to existing methods directly compressing gradients, our method reduces the communication cost with even better learning performance.
    \item To the best of our knowledge, we are among the first to exploit the predictive coding tool to improve the communication efficiency of federated learning. 
    \item The proposed scheme can be viewed as a general compression method for federated learning. 
    Our designed predictive coding, quantization, and entropy coding components can be incorporated into other existing FedAvg-type algorithms jointly or separately.      
\end{enumerate}

The remainder of the paper is organized as follows. 
In Section~\ref{section:related_works}, we review the relevant work. 
In Sections~\ref{section:system_model}--\ref{section:algorithm_analysis}, we formulate the problem, present the proposed scheme, and conduct the analysis of the algorithm, respectively.
We discuss the experimental results in Section~\ref{section:experiment} and conclude the paper in Section~\ref{conclusion_section}

\section{Related Work}\label{section:related_works}

\subsection{Predictive Coding as a Compression Tool}
Predictive coding has been widely adopted in signal compression for decades. 
In the predictive coding framework, predictors are designed to estimate a target signal from past or present observations.
A residual signal can be obtained based on the output of the predictor, which tends to have a distribution with lower entropy compared with the initial distribution.  
Eliminating statistical redundancy via predictive coding is one of the key steps in data compression~\cite{spanias2006audio}. 
Researchers have empirically verified the effectiveness of the predictive coding tool in the application of video coding~\cite{sze2014high}. 
Prior work also analytically shows that quantizing the raw input signal rather than a decorrelated version will lower the rate--distortion performance~\cite{wong2016impact}. 

Predictive tools can be traced back to the differential pulse code modulation (DPCM) in the 1950s~\cite{sayood2017introduction}.  
For analog-to-digital (A/D) conversion, DPCM can reduce the bandwidth by transmitting the difference between two successive sample values~\cite{lathi1998modern}. 
Adaptive DPCM introduces dynamic coefficients that are adjusted based on the prediction error. 
Different from waveform coding such as DPCM, linear predictive coding (LPC) is developed for speech signal synthesis. 
LPC analyzer estimates the voice signal formants, which can be removed to calculate the residual signal~\cite{lathi1998modern}. 
For image compression, JPEG contains different predictive schemes to perform the prediction based on adjacent pixel values~\cite{sayood2017introduction}. 
All possible different predictions are tried and the one achieving the lowest bitrate is used.
Sophisticated video coding frameworks such as HEVC/H.265 exploit different predictive tools to reduce the spatial and temporal redundancy and enable the compression of the video data~\cite{sze2014high}. 

To remove the signal redundancy for compression, predictors are designed based on different domain knowledge to adapt to various use cases.
In our work, we incorporate the idea of predictive coding into the design of a federated learning scheme to compress the uplink data needed for the joint training. 
We construct several predictors to estimate the model updates and then compress the residual vector between the original signal and the estimation.

\subsection{Efficient Communication in Federated Learning}
In federated learning, the main focus of reducing the communication cost is on the uplink transmission, as the downlink bandwidth is much larger and the server is assumed to have enough transmission power~\cite{tran2019federated}.  
Prior studies have proposed different strategies to reduce the size of the transmitted message in each communication round. 
Wiedemann et al. \cite{wiedemann2020deepcabac} designed a coding scheme to compress the neural network weights. 
Other works applied quantization or sparsification tools to reduce the size of the gradients~\cite{sattler2019robust,reisizadeh2020fedpaq}. 
To reduce the negative effect introduced by gradient compression, recent works have also developed error feedback mechanisms~\cite{liu2020double}.

From the coding theory perspective~\cite{sayood2017introduction}, existing works that directly compressed the weights or gradients to be transmitted may not reduce coding rate in the most efficient way.
In this work, we demonstrate that encoding the prediction residue of the model weight provides a better trade-off between communication efficiency and model utility. 
We show that compressing the model update, as proposed in~\cite{haddadpour2020fedcom, reisizadeh2020fedpaq}, is analogous to the DPCM encoding of the weight in a general coding scheme.

\subsection{Efficient Communication with Entropy Coding}
Entropy coding is a lossless compression tool and it typically follows the lossy quantization step~\cite{sayood2017introduction}.
The combination allows a smooth trade-off between the bitrate and data fidelity.  
Previous research has utilized different entropy coding techniques to ease the communication burden. 
Quantized stochastic gradient descent (QSGD)~\cite{alistarh2017qsgd} combines the quantization and entropy coding to compress gradient vectors. 
In their work, a sparse gradient is generated with a stochastic quantizer, and the positions of nonzero entries are compressed with Elias integer coding~\cite{elias1975universal}. 
In~\cite{sattler2019sparse, sattler2019robust}, the authors first sparsify the gradient by selecting entries with large magnitudes. The distances between nonzero elements are then encoded with Golomb coding~\cite{sayood2017introduction}.  
Likewise, Lin et al.~\cite{lin2018deep} set a threshold to filter out gradients with small magnitudes and adopt run-length coding~\cite{sayood2017introduction} to encode the sparsified vector.
Compared to prior studies~\cite{alistarh2017qsgd, lin2018deep, sattler2019robust, sattler2019sparse} that encode nonzero entries in model updates, our work adopts an entropy coder to reduce the average codeword length for the quantized residues.

\section{System Model}\label{section:system_model}
\subsection{Federated Learning Model}
Consider a federated learning architecture where a server trains a global model by exchanging information with $M$ workers, each of which hosts a private local dataset. 
The local dataset of the $m$th worker is denoted as $\mathcal{D}_{m} = \{\z_{m,j}\}_{j=1}^{n_m}$, 
where $\z_{m,j}$ is the $j$th data point, containing a pair of input and label, drawn from a distribution $\pdist$.
The local objective can be formulated as the empirical risk function with finite data points: 
\begin{equation}\label{def:local_objective_function}
    f_{m}(\w) \triangleq f_{m}(\w; \mathcal{D}_{m}) = \frac{1}{n_m} \sum_{j=1}^{n_m} \ell(\w; \z_{m,j}),
\end{equation}
where $\ell$ is a sample-wise loss function quantifying the error of the model with a weight $\w \in \mathbb{R}^{d}$ estimating the label for an input in $\z_{m,j}$. 
Suppose our global objective function is denoted as $f(\w)$, a federated learning problem may be formulated as 
\begin{equation}\label{global_objective_function}
    \min_{\w \in \mathbb{R}^{d}} f(\w) = \frac{1}{M} \sum_{m=1}^{M} f_{m}(\w).
\end{equation}
FedAvg is a popular federated learning method~\cite{mcmahan2017communication}. 
It adopts a periodic averaging strategy comprising three steps within each communication round. 
First, in communication round $k$, a global server will broadcast its weight vector $\w[k]$ to each worker. 
Second, each worker uses a gradient descent based method to independently update its own model for $\uptau$ local iterations. 
In particular, for local iteration $t\in [0,\uptau-1]$, given a mini-batch $\xi^{(k, t)}_{m} \subset \mathcal{D}_{m}$ of data points uniformly randomly
drawn from $\mathcal{D}_{m}$, the local weight $\w[k,t]_{m}$ may be updated as follows:
\begin{equation}\label{eq:local_update_rule}
    \w[k,t+1]_{m} = \w[k,t]_{m} - \eta \, \nabla f_{m}(\w[k,t]_{m}; \xi^{(k, t)}_{m}),   
\end{equation}
where $\eta$ is the learning rate. 
After $\uptau$ local iterations, each worker will obtain an updated local model $\w[k, \uptau]_{m}$.
In the final step, the server calculates the global weight by aggregating local weights $\w[k,\uptau]_{m}$'s from all workers, namely,
\begin{equation}\label{def:agg}
    \w[k+1] = \frac{1}{M} \sum_{m=1}^{M} \; \w[k,\uptau]_m.
\end{equation}
The algorithm then proceeds into the next communication round 
when the server broadcasts $\w[k+1]$ to each worker. 

\subsection{Compression Model}\label{section:communication_compression}
FedAvg assumes that the $m$th worker uploads the local weight/gradient vector to the server.
However, modern deep learning models tend to have a large number of weights, which can lead to a prohibitive communication cost. 
Instead of transmitting the original weights, one can use an encoding function $\enc(\cdot)$ to allow the worker to transmit a compressed version of the weight vector, 
\begin{equation}\label{eq:binary_code}
    \code[k]_m = \enc(\w[k,\uptau]_m; \mathcal{M}_m), 
\end{equation}
where $\mathcal{M}_m$ is some historical information in the memory of the $m$th worker.
For example, the broadcasted weight vectors in the previous communication rounds can be kept in the memory $\mathcal{M}_m$. 
The server will use a decoding function $\dec(\cdot)$ to reconstruct the local model 
\begin{equation}\label{eq:decode}
    \hw[k,\uptau]_m =  \dec(\code[k]_m; \mathcal{M}),
\end{equation}
where $\mathcal{M}$ is some historical information in the memory of the server.
The aggregation process in (\ref{def:agg}) will be modified to  
\begin{equation}\label{def:compressed_agg}
    \w[k+1] = \frac{1}{M} \sum_{m=1}^{M} \hw[k,\uptau]_m.
\end{equation}

\subsection{Transmission Model}\label{section:transmission_model}
In this work, we assume orthogonal frequency division multiple access (OFDMA) is employed to transmit the local updates to the server. 
The interference between different workers is ignored for simplicity.  
We use the channel capacity $c_m$ to estimate the uplink rate
\begin{equation}
    c_m = B \log_2 \left(1 + \frac{P_m h_m^2}{B N_0}\right), 
\end{equation}
where $B$ is the bandwidth, $P_m$ is the transmission power of worker $m$, 
$h_m$ is the channel gain, and $N_0$ is the noise power spectral density. 
We assume a quasi-static channel with channel gain   
\begin{equation}
    h_m = A_{d} \left( \frac{3\cdot 10^{8}}{4 \pi f_c d_m} \right)^{d_{e}}, 
\end{equation}
where $A_{d}$ is the antenna gain, $f_c$ is the carrier frequency, $d_m$ is the distance between worker $m$ and the server, and $d_e$ is the path loss exponent. 
For the downlink transmission, we assume the model will be broadcast through an error--free channel.  


\begin{figure*}[!tb]
    \centering
    \begin{overpic}[width=0.95\textwidth]{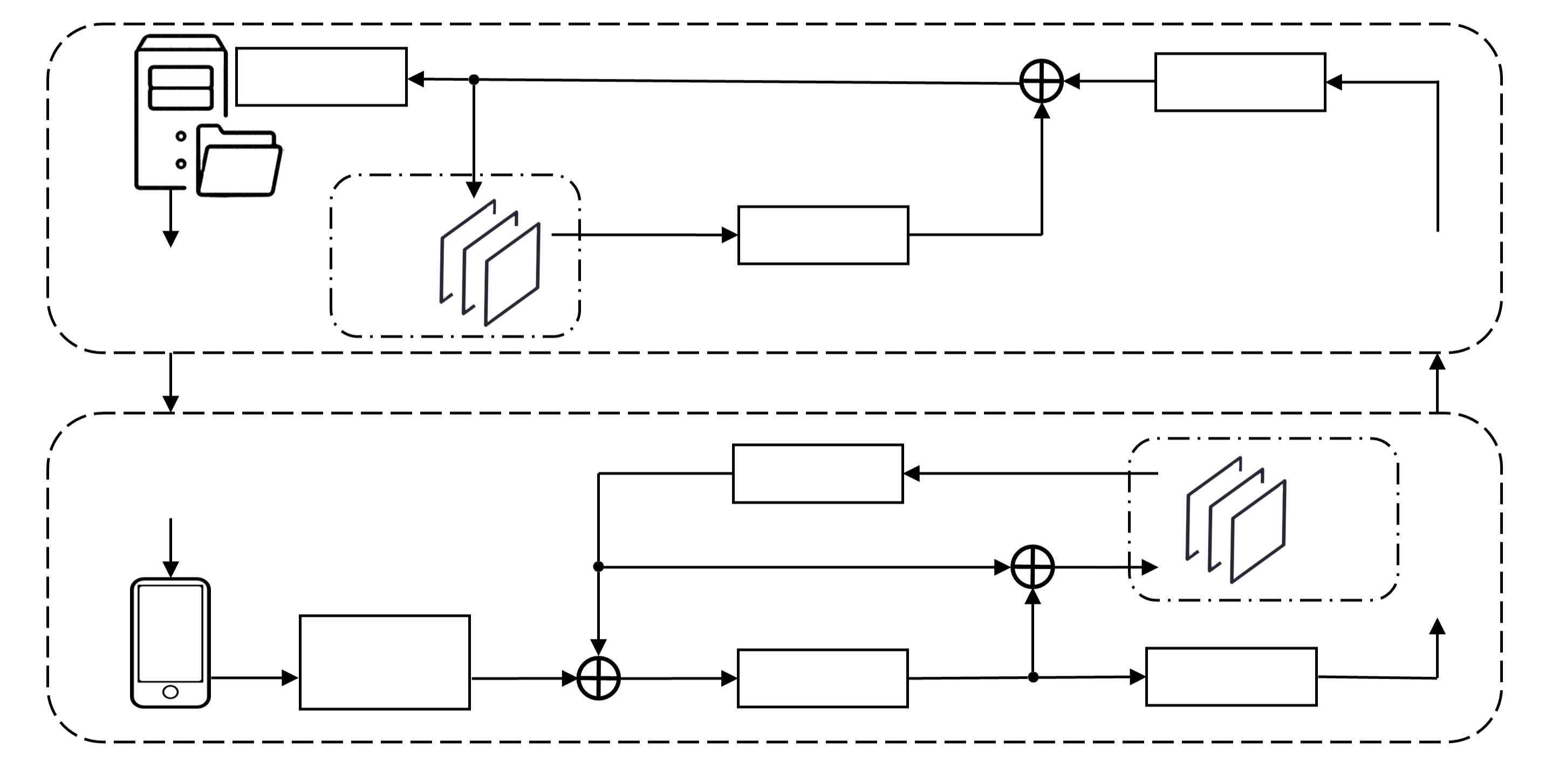}
    \put(12, 25.5){\intab{downlink channel}}
    \put(82, 25.5){\intab{uplink channel}}
    \put(4.5, 20){\intab[0.9]{worker side}}
    
    \put(3.5, 8){\intab{the $m^{\text{th}}$}}
    \put(3.5, 6.2){\intab{worker} }
    \put(12, 15){\intab{receives $\w[k]$}}

    \put(14, 7.5){\intab{$\w[k,0]_{m}$}}
    \put(20.5, 9.3){\intab{weight vector}} 
    \put(20.5, 7.5){\intab{updated by $\uptau$}} 
    \put(20.5, 5.7){\intab{iterations}}
    \put(31, 7.5){\intab{$\w[k,\uptau]_{m}$}}

    \put(39, 8){$-$}
    \put(50, 6.5){\intab{quantizer}}
    \put(41, 7.5){\intab{$\res[k]_{m}$}}
    \put(60, 7.5){\intab{$\hres[k]_{m}$}}
    \put(68, 14.7){\intab{$\hw[k, \uptau]_{m}$}}
    
    \put(83.5, 18){\intab{local} }
    \put(83.5, 16.2){\intab{buffer} }
    \put(83.5, 14){\intab{$\mathcal{M}_{m}$}}

    \put(42, 21){\intab{$\pw[k,\uptau]_{m}$}}
    \put(50, 19.5){\intab{predictor}}

    \put(74.5, 6.5){\intab{entropy coder}}
    \put(86, 7.5){\intab{$\code[k]_{m}$}}
    
    \put(90.5, 12.5){\intab{uploads}}
    \put(90.5, 11){\intab{to server}}
    
    \put(4.5, 30){\intab[0.9]{server side}}

    \put(3.5, 44){\intab{central}}
    \put(3.5, 42.2){\intab{server}}

    \put(7.9, 33){\intab{broadcasts $\w[k]$}}
    
    \put(17, 45){\intab{aggregator}}
    \put(23, 36){\intab{global}}
    \put(23, 34.2){\intab{buffer}}
    \put(23, 32){\intab{$\mathcal{M}$}}

    \put(50, 35){\intab{predictor}}
    \put(60, 36){\intab{$\pw[k,\uptau]_{m}$}}

    \put(74.5, 44.7){\intab{entropy decoder}}
    \put(71, 46){\intab{$\hres[k]_{m}$}}
    \put(62, 46){\intab{$\hw[k,\uptau]_{m}$}}
    \put(86.5, 46){\intab{$\code[k]_m$}}

    \put(83, 36){\intab{receives $\code[k]_{m}$}}
    \put(83, 34.2){\intab{from the $m^{\text{th}}$}}
    \put(83, 32.4){\intab{worker}}
    \end{overpic}
    \caption{Proposed federated learning scheme via predictive coding. 
    First, each worker receives $\w[k]$ from the server. 
    Second, each worker performs $\uptau$ local iterations to update the model.  
    The $m$th worker will obtain $\w[k,\uptau]_{m}$ after $\uptau$ iterations, and then use the predictor to estimate the weights based on its local buffer $\mathcal{M}_m$.  
    The residue between updated weights $\w[k,\uptau]_{m}$ and the predicted version $\pw[k,\uptau]_{m}$ will be fed into the quantizer. 
    The quantized residue $\hres[k]_{m}$ is then entropy coded and uploaded to the server.
    On the server side, the decoding procedure is performed as the inverse operation of the encoder. 
    }
    \label{fig:predfl}
\end{figure*}

\section{Proposed Predictive Coding-Based Compression Scheme}\label{coding_framework_section}

The proposed predictive coding-based compression scheme comprises three steps, namely, prediction, quantization, and entropy coding. 
As reviewed in Section~\ref{section:system_model}, for a worker $m$, instead of sending to the server the locally updated model $\w[k,\uptau]_m$ generated by iteratively invoking \eqref{eq:local_update_rule} for $\uptau$ local iterations, one can send the compressed version in \eqref{eq:binary_code} to lower the required bandwidth.
Our proposed scheme chooses to send to the server a compressed version of a residue $\res[k]_{m} = \w[k,\uptau]_{m} - \pw[k,\uptau]_{m}$.
Here, $\pw[k,\uptau]_{m}$ is a predicted version of $\w[k,\uptau]_{m}$ generated by combining the historical global weights broadcasted losslessly to workers. 
An ideal predictor will be able to decorrelate all coordinates of the residual vector $\res[k]_{m}$, boosting the effectiveness of the subsequent scalar quantization and entropy coding.
During the decoding process, the same prediction process of every worker is conducted to reconstruct the final local weight vectors $\{\hw[k, \uptau]_m\}_{m=1}^M$ and they are used for model aggregation as described by \eqref{def:compressed_agg}.
The procedure is illustrated in \paperfig{fig:predfl}, 
and the pseudocode is shown in Algorithm~\ref{coding_algorithm}. 
Below, we explain the proposed design in more detail.

\begin{algorithm}[tb]
    \caption{Predictive Coding Based Compression \label{coding_algorithm}}
    \SetKwComment{Comment}{$\triangleright$ }{}
    \SetKwBlock{server}{on server \textbf{do}}{}
    \SetKwBlock{worker}{for each worker $m = 1 : M$ \textbf{do}}{}
    \SetKw{init}{initialize}
    \SetKw{rec}{receive}
    \SetKw{pull}{pull}
    \SetKw{from}{from}
    \SetKw{push}{push}
    \SetKw{send}{send}
    \SetKw{to}{to}
    \SetKw{broadcast}{broadcast}
    \DontPrintSemicolon
    
    \init global weight $\w[0]$, local datasets $\{\mathcal{D}_{m}\}_{m=1}^{M}$ \\
    \For{$k=0,\, 1,\, \ldots, N $}{   
        \worker{
            \rec global weight $\w[k]$ \from server \\
            \init local weight $\w[k, 0]_m \leftarrow \w[k]$   \\
            \For{$t = 0 : \uptau-1$}{
                $\w[k,t+1]_{m} = \w[k,t]_{m} - \eta\, \nabla f_{m}(\w[k,t]_{m}; \xi^{(k,t)}_{m})$
            }
            $\pw[k, \uptau]_{m} = \pred(\w[k,0]_{m}; \mathcal{M}_{m})$ 
            $\res[k]_{m} \leftarrow \w[k,\uptau]_{m} - \pw[k,\uptau]_{m} $ \\ 
            $\{\sym^{(k)}_{m},\,\|\res[k]_m\|_{p} \} \leftarrow \quant(\res[k]_{m}) $ 
            $\code[k]_m \leftarrow \eenc(\sym^{(k)}_{m}) $  \\
            \send $\{ \code[k]_m, \|\res[k]_m\|_{p} \}$ \to server 
        }
    
        \server{
            \rec $\{\code[k]_m, \|\res[k]_m\|_{p}\}_{m=1}^M$ \from all workers   \;
            \For{$m = 1:M$}{   
                $\pw[k,\tau]_{m} = \pred(\w[k]; \mathcal{M})$
                $\sym^{(k)}_m \leftarrow \edec(\code[k]_m)$ \\ 
                $\hres[k]_{m} \leftarrow \dequant(\sym^{(k)}_m, \|\res[k]_{m}\|_{p})$ \\ 
                $\hw[k, \uptau]_{m} \leftarrow \pw[k,\uptau]_{m} +  \hres[k]_{m} $
            }
            $\w[k+1] = \frac{1}{M} \sum_{m=1}^{M} \hw[k, \uptau]_m$ \\ 
            \broadcast $\w[k+1]$ \to all workers
        }
    }
    \end{algorithm}

\subsection{Design of Predictor}\label{section:predictor}

To introduce the prediction schemes, we first examine model updates for each worker.
We define the accumulative local update $\de^{(k)}_{m}$ as the difference between the initial local weight vector $\w[k,0]_{m}$ and the final local updated weight vector $\w[k,\uptau]_{m}$ after $\uptau$ iterations, namely, 
\begin{equation}\label{eq:local_delta_def}
    \de^{(k)}_{m} \triangleq \w[k,0]_{m} - \w[k,\uptau]_{m}.
\end{equation}
With the update rule in (\ref{eq:local_update_rule}), $\de^{(k)}_{m}$ is equal to the accumulative local gradients scaled by the learning rate, i.e., 
\begin{equation}\label{eq:local_delta_accgrad}
    \de^{(k)}_{m} = \eta \sum_{t=0}^{\uptau-1} \nabla f_{m}(\w[k,t]_{m}; \xi^{(k,t)}_{m}). 
\end{equation}
Due to the use of the lossy compression scheme, the decoder on the server can only get access to the imperfectly reconstructed final local weight vector $\hw[k,\uptau]_{m}$. 
To ensure the consistency of the encoder and the decoder, the local memory will track $\hde^{(k)}_{m}$, which is defined as 
\begin{equation}\label{eq:local_hdelta_def}
    \hde^{(k)}_{m} \triangleq \w[k,0]_{m} - \hw[k,\uptau]_{m}.
\end{equation}

When the learning rate $\eta$ and number of local iterations are reasonably selected, the variance of $\de^{(k)}_{m,i}$'s will be smaller than that of $w^{(k,\uptau)}_{m,i}$'s. 
See \paperfig{fig:weight_and_gradient_distribution} for the empirical results. 
In other words, the weight updates have lower entropy compared with the original weights. 
Inspired by DPCM, we can compress the model difference vector $\de[k]_{m}$ instead of the weight vector $\w[k,\uptau]_{m}$. 
By doing so, we recover some recently proposed methods such as FedPAQ~\cite{reisizadeh2020fedpaq}. 
We list this method as the first prediction mode in \papertab{tab:prediction_modes}.

\begin{table}
    \caption{\sc Different Prediction Modes and Space Complexity \label{tab:prediction_modes}}
    \begin{tabular}{p{0.05\textwidth}p{0.2\textwidth}p{0.15\textwidth}}
        \toprule
        Mode & Prediction for $\w[k,\uptau]_{m}$ & Space Complexity \\ \midrule
        1 & $\w[k,0]_{m}$ & $d$ \\[5pt]
        2 & $\coeff[k] \circ \w[k,0]_{m} + \coeff[k]_{0}$ & $3\,d$ \\[5pt]
        3 & $\w[k,0]_{m} - \frac{1}{R} \sum\nolimits_{r=1}^R \hde^{(k-r)}_{m}$ & $(R+1)\, d$ \\[5pt]
        4 & $\w[k,0]_{m} - c \, \firmoment[k]_{m} \circ \denom[k]_{m}$ & $3\, d$ \\ \bottomrule 
    \end{tabular}
\end{table}   

\begin{figure}[tb]
    \centering
    \subcaptionbox{\label{subfig:accumulative_grad_distribution}}[0.24\textwidth]{
        \begin{overpic}[width=\linewidth]{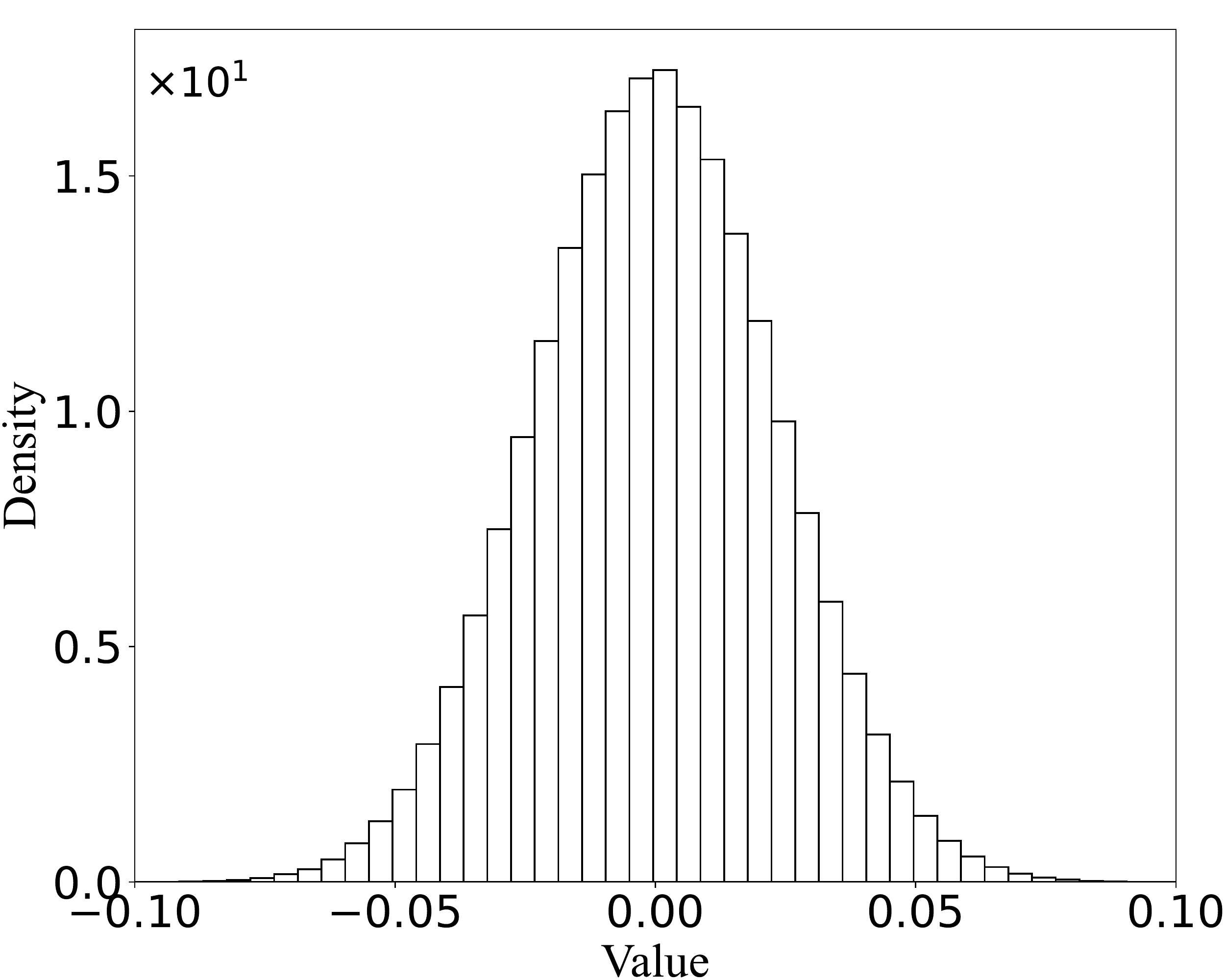}
        \put(13,60){\intab[0.55]{$\sigma^2 = 5.4 \times 10^{-4}$}}
        \end{overpic}
    }
    \subcaptionbox{\label{subfig:latent_prob_distribution}}[0.24\textwidth]{
        \begin{overpic}[width=\linewidth]{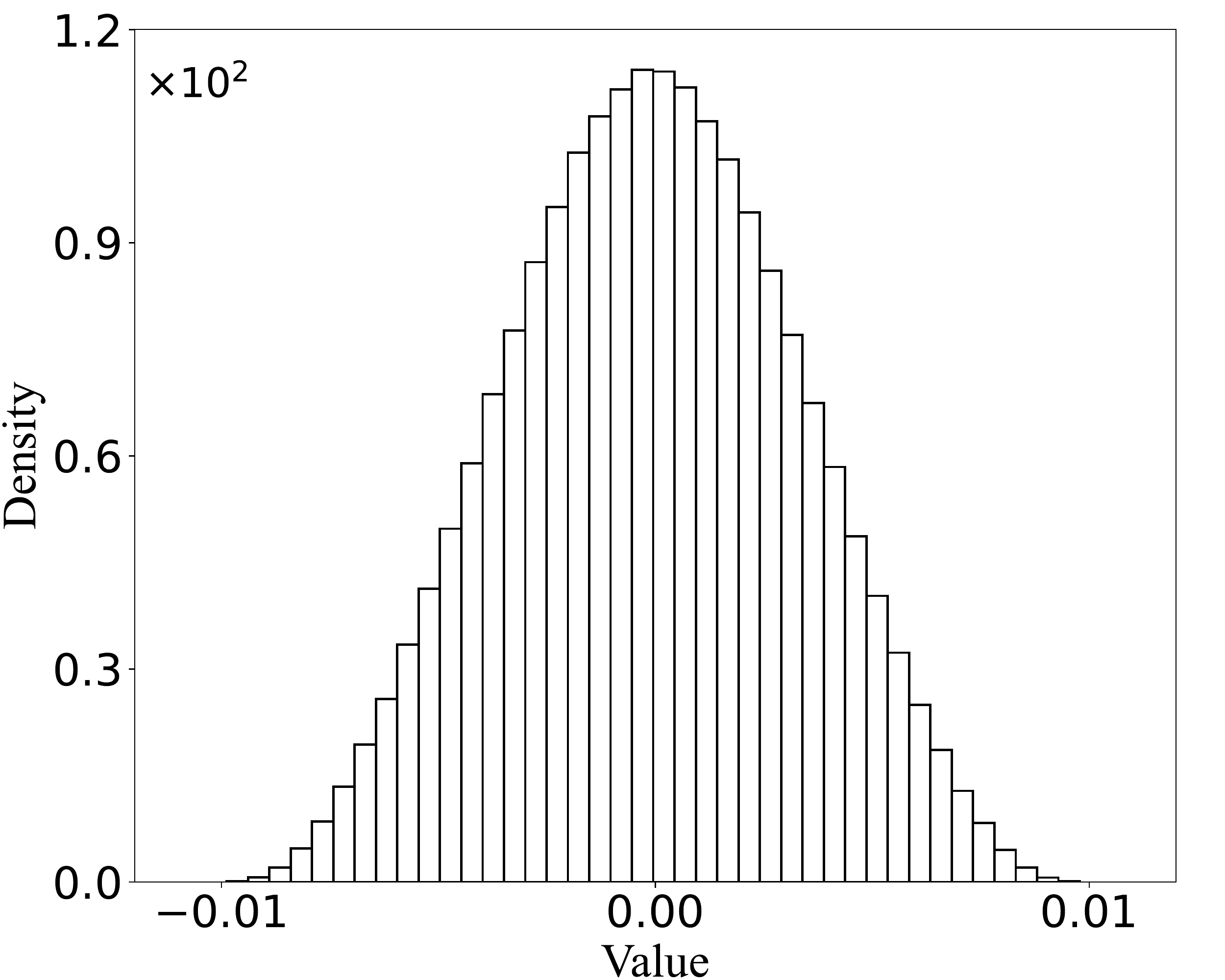}
            \put(13.,60){\intab[0.55]{$\sigma^2 = 1.1 \times 10^{-5}$}}
        \end{overpic}
    }
    \caption{
    Histograms of (a) weights $w^{(k,\uptau)}_{m,i}$ and 
    (b) model updates $\Delta^{(k)}_{m,i}$.
    The results are obtained from a single worker by training a convolutional neural network for $\uptau=20$ iterations.  
    Note that by using the model updates, the spread of information is reduced by an order of magnitude than using the original model weight. 
    This corresponds to a reduction of approximately $1$ bit per coordinate in entropy.
    }
    \label{fig:weight_and_gradient_distribution}
\end{figure}

The design of the prediction function can be also inspired from the properties of SGD. 
For example, Mandt et al.~\cite{mandt2016variational} use a multivariate Ornstein--Uhlenbeck process to approximate SGD: 
\begin{equation}\label{eq:ou_process}
    \diff \w[k,t+1]_{m} = -\textrm{A} \, \w[k,t]_{m} \diff t + \textrm{B} \diff \mathbf{W}_{t}, 
\end{equation} 
where $\textrm{A},\, \textrm{B} \in \mathbb{R}^{d\times d}$ are constant matrices, and $\mathbf{W}_{t}$ is a $d$-dimensional Wiener process. 
Based on the solution to \eqref{eq:ou_process}, an autoregressive (AR) model can be leveraged to design the predictor. 
We use a coordinate-wise linear predictor by introducing coefficients $\coeff[k]_{m} \in \mathbb{R}^{d}$ and biases $\coeff[k]_{0,m} \in \mathbb{R}^{d}$, i.e., 
\begin{equation}
    \pw[k,\uptau]_{m} = \coeff[k]_{m} \circ \w[k,0]_{m} + \coeff[k]_{0,m},
\end{equation}
where $\circ$ is the elementwise product. 
Similar to the adaptive DPCM method, we use the gradient descent to update coefficients. 
We first calculate the mean squared error (MSE) of the prediction, i.e., 
\begin{equation}
    \mse[k]_{m} = \frac{1}{d} \left\| \pw[k,\uptau]_{m} - \hw[k,\uptau]_{m} \right\|_2^2,  
\end{equation}
and update the coefficients with step size $a$,
\begin{equation}
    \coeff^{'(k)}_{m} = \coeff[k]_{m} - a \, \frac{\partial\, \mse[k]_{m}}{\partial\, \coeff[k]_{m}}.    
\end{equation}
Here, we stress that the MSE measures the difference between the predicted version $\pw[k,\uptau]_{m}$ and the reconstructed version $\hw[k,\uptau]_{m}$. 
Note that by transmitting the quantized residue $\hres[k]_{m}$, $\hw[k,\uptau]_{m}$ is available at both the worker side and the server side to allow the prediction coefficients to be synchronized.
We list this method as the second prediction mode in \papertab{tab:prediction_modes}.

Another way to generate a prediction for $\w[k,\uptau]_{m}$ is to exploit the relation in \eqref{eq:local_delta_def} to obtain an estimated $\pde[k]_{m}$. 
The predicted $\pw[k,\uptau]_{m}$ can be written as 
\begin{equation}
    \pw[k,\uptau]_{m} = \w[k,0]_{m} - \pde[k]_{m}.  
\end{equation}
By assuming the recent increments are correlated, 
we use the moving average of the globally available weight differences to do the prediction, namely, 
\begin{equation}
    \pde[k]_{m} = \frac{1}{R} \sum_{r=1}^{R} \hde^{(k-r)}_{m}. 
\end{equation}
We list this method as the third prediction mode in \papertab{tab:prediction_modes}.

Equation \eqref{eq:local_delta_accgrad} illustrates that the weight difference $\de^{(k)}_{m}$ is the accumulative local gradient updates. 
We borrow the wisdom from the adaptive moment estimation (Adam) optimizer~\cite{kingma2015adam} by smoothing the trajectory of $\hde^{(r-k)}_{m}$ for prediction. 
In particular, we take the exponential moving average to track the first and second raw moments, i.e., 
\begin{subequations}
\begin{align}
    \firmoment[k]_{m} &= \beta_1 \, \firmoment[k-1]_{m} + (1-\beta_1)\, \hde[k-1]_{m}, \\
    \secmoment[k]_{m} &= \beta_2 \, \secmoment[k-1]_{m} + (1-\beta_2)\, \hde[k-1]_{m} \circ  \hde[k-1]_{m},
\end{align}
\end{subequations}
where $\beta_{1}$ and $\beta_{2}$ are predefined scalar coefficients. 
The prediction is designed as 
\begin{equation}
    \pde[k]_{m,i} = c \, \frac{\firmom[k]_{m,i}}{\sqrt{\secmom[k]_{m,i} } + \varepsilon}, 
\end{equation}
where $c$ is a constant, and $\varepsilon$ is a small value added to the denominator for numerical stability. 
We simplify the notation by introducing a vector $\denom[k]_{m}$ such that 
\begin{equation}
    \pde[k]_{m} = c \, \firmoment[k]_{m} \circ \denom[k]_{m}. 
\end{equation}
We list this method as the fourth prediction mode in \papertab{tab:prediction_modes}.
To achieve the best compression performance, we traverse all prediction modes and choose the one that gives the smallest prediction error. 
Two-bit mode information is transmitted separately for signaling the selected mode to the decoder. 
The time complexity of the proposed prediction step involves standard matrix calculations and grows linearly with the number of modes $N$. 
We note that communication is the bottleneck in federated learning~\cite{reisizadeh2020fedpaq}. 
The local computation is dominated by neural network optimization, and the additional computational cost is therefore negligible.

The design of different prediction modes and their empirical performance have been well exploited in image and video compression~\cite{sze2014high, sayood2017introduction}.  
We provide a mathematical justification that the prediction error decreases as more predictor candidates are included. 
\begin{Lemma}\label{lemma:multimode_prediction}
    In one communication round, suppose we have a sequence of infinitely many prediction modes $i = 1, 2, \dots, $ each mode can result in an independent, nonnegative prediction error 
    that admits a probability density function $f_{X_i}(x)$. For the first $N$ candidate modes, a mode selection scheme picks the one with the lowest absolute error. 
    Denote the error as $Y_N$.
    The expectation of the smallest prediction error $Y_N$ is a monotonically decreasing function of $N$.
\end{Lemma}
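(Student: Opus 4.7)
The plan is to rewrite $\mathbb{E}[Y_N]$ using the tail--sum formula for nonnegative random variables and then exploit the independence of the $X_i$'s to turn the statement into a pointwise comparison of two integrands, from which monotonicity in $N$ is immediate.

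First, I would observe that the mode--selection rule picks the smallest absolute error among the first $N$ candidates, so $Y_N = \min\{X_1,\ldots,X_N\}$. Because each $X_i$ is nonnegative and the $X_i$'s are independent, the survival function of $Y_N$ factorizes:
\begin{equation}
\mathbb{P}(Y_N > y) \;=\; \prod_{i=1}^{N} \mathbb{P}(X_i > y), \qquad y \geq 0.
\end{equation}
Since $Y_N \geq 0$, I would then apply the standard identity $\mathbb{E}[Z] = \int_0^{\infty} \mathbb{P}(Z > z)\,dz$ for nonnegative $Z$ to obtain
\begin{equation}
\mathbb{E}[Y_N] \;=\; \int_{0}^{\infty} \prod_{i=1}^{N} \mathbb{P}(X_i > y)\,dy.
\end{equation}

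Second, I would compare the integrands for $N$ and $N+1$. Because $\mathbb{P}(X_{N+1} > y) \in [0,1]$ for every $y \geq 0$, the $(N+1)$--fold product is pointwise no larger than the $N$--fold product. Integrating this pointwise inequality in $y$ yields $\mathbb{E}[Y_{N+1}] \leq \mathbb{E}[Y_N]$, which is exactly the monotone--decreasing statement claimed by the lemma. An alternative one--line route is the sample--path inequality $Y_{N+1} = \min\{Y_N, X_{N+1}\} \leq Y_N$ almost surely, which gives the same conclusion by taking expectations; I would still prefer the tail--sum presentation because it makes the dependence on $N$ explicit and sets up the next point.

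The main obstacle, and essentially the only subtle point, is whether the monotonicity is strict. It is strict whenever there exists an interval $I \subset [0,\infty)$ on which $\mathbb{P}(X_{N+1} > y) < 1$ and simultaneously $\prod_{i=1}^{N} \mathbb{P}(X_i > y) > 0$. Since the $X_i$'s admit probability density functions $f_{X_i}$, this non--degeneracy condition holds whenever the candidate errors have overlapping supports near the origin, which is the natural regime of interest (otherwise one prediction mode strictly dominates and the statement is trivial). I would therefore close the proof by stating strict inequality under this mild condition and non--strict inequality in general, either of which satisfies the lemma as phrased.
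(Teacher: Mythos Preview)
Your proposal is correct and takes essentially the same approach as the paper: both express $Y_N=\min\{X_1,\ldots,X_N\}$, factorize the survival function via independence, apply the tail--sum formula $\mathbb{E}[Y_N]=\int_0^\infty \prod_{i=1}^N \mathbb{P}(X_i>y)\,dy$, and compare integrands pointwise in $N$. Your write-up is in fact slightly more careful than the paper's, since you also address strictness and note the alternative sample--path argument $Y_{N+1}\le Y_N$.
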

\begin{proof}
    See \paperApp{proof:multimode_prediction}. 
\end{proof}

Finally, we discuss the issue of memory cost in the implementation.
In practice, the total number of workers $M$ can be a large value. 
Keeping the memory in sync with all workers on the server may incur additional computation and storage cost. 
With the globally available model weights, we can use 
\begin{equation}
    \hde^{\prime(k)}_{m} \triangleq \w[k,0]_{m} - \w[k+1] = \w[k] - \w[k+1]
\end{equation}
to replace the original $\hde[k]_{m}$ defined in \eqref{eq:local_hdelta_def}. 
This strategy reduces the memory cost from $\order{Md}$ to $\order{d}$ at the expense of less precise prediction for each worker. 

\subsection{Quantization}\label{quantization_subsection}
Quantization maps continuous input values to discrete symbols. 
In our work, we use the operator $\quant(\cdot)$ to represent the forward quantization stage. 
In particular, it decomposes an input vector into its norm and direction and quantizes the projected directions to discrete symbols.  
The operator $\dequant(\cdot)$ denotes the dequantization stage that reconstructs the input vector by reassembling the discrete symbols back to a directional vector and multiplying it by the norm.
The quantizer $Q(\cdot)$ is defined as the composition of the forward quantization $\quant(\cdot)$ and the dequantization $\dequant(\cdot)$. 
Given an input residue $\res$, the quantizer maps each entry as follows:
\begin{equation}
    Q(e_i) = \frac{\kappa}{s} \, \|\res\|_p \cdot \sign (e_i) \cdot \varphi_i(\res, s),\label{eq:quant_operator}
\end{equation}
where $\kappa$ is a scaling factor, $s$ is a predefined parameter and the number $L$ of representation levels is equal to $2s+1$, $\|\res\|_p$ is the $\ell_{p}$ norm of $\res$, and $\varphi_i(\res, s)$ is an integer value representing the unsigned quantized level of the $i$th coordinate of vector $\res$.

In this work, we use two types of quantizers.
For a deterministic mid-tread uniform quantizer $\uniformquant$ that has a zero-valued reconstruction level~\cite{schuller2020quantization}, we set $\varphi_i$ in \eqref{eq:quant_operator} by $\univar_i$ defined as follows:
\begin{equation}
    \univar_i(\res, s) = \left\lfloor \frac{s |e_i|}{ \kappa \|\res\|_p } + \frac{1}{2} \right\rfloor,
\end{equation}
whose effective quantization step is $\kappa \|\res\|_p/s$.
For a stochastic quantizer $\stoquant$, we set $\varphi_i$ in \eqref{eq:quant_operator} by $\stovar_i$ defined as follows:
\begin{equation}\label{def:normalization_qsgd}
    \stovar_i(\res, s) = \left\{
        \begin{array}{l @{\; \;} l}
        \ell &  \text { with prob. } 1-p_{i}, \\[5pt]
        \ell + 1 & \text { with prob. } p_{i} = \frac{s|e_i|}{\kappa \|\res\|_p} - \ell,
    \end{array}\right.
\end{equation}
where $\ell \in [0,s)$ is an integer and $\frac{|e_{i}|}{\kappa \|\res\|_{p}} \in[\ell / s,(\ell+1) / s]$~\cite{alistarh2017qsgd}. 
The unsigned quantized level given by $\stovar_i(\res, s)$ in its binary representation will be combined with the sign of the quantizer input $e_i$ and further compressed by an entropy coder. 
Specifically, the sign will be concatenated to the least significant bit to produce an unsigned quantized level $h_i$ as follows:
\begin{equation}\label{eq:quant_map}
    h_i = \left\{
    \begin{array}{l @{\;,\;} l}
        2\,  \varphi_i(\res, s)  & \text{ if } \sign(e_i) \leqslant 0, \\
        2 \,  \varphi_i(\res, s) - 1 & \text{ if } \sign(e_i) > 0.
    \end{array}
    \right.
\end{equation}
Formally, we show that the mapping scheme in \eqref{eq:quant_map} can reduce the average codeword length compared to the method that separately  encodes the signs and absolute integer values~\cite{alistarh2017qsgd}.

\begin{Lemma}\label{lemma:sign_mapping}
    Define $\phi_i \triangleq \sign(e_i) \cdot \varphi_i(\res, s). $
    The mapping defined in \eqref{eq:quant_map} can be rewritten as 
    \begin{equation}\label{eq:re_quant_map}
        h_i = \left\{
        \begin{array}{l @{\;,\;} l}
            -2\, \phi_i   & \textrm{ if } \sign(\phi_i) \leqslant 0, \\
            2 \,  \phi_i - 1 & \textrm{ if } \sign(\phi_i) > 0.
        \end{array}
        \right.
    \end{equation}
    Consider following nonnegative integer encoding schemes: 
    (i) use the mapping defined in \eqref{eq:re_quant_map} and then encode $\phi_i$; 
    (ii) separately encode $|\phi_i|$ and $\sign (\phi_i)$.
    Scheme (i) has a shorter average codeword length.      
\end{Lemma}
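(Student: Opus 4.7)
The plan is to reduce the codeword-length comparison to an entropy inequality and then invoke the near-optimality of entropy coders. First, I would verify that the map $\phi_i \mapsto h_i$ defined in \eqref{eq:re_quant_map} is a bijection from the signed integers $\{-s,\ldots,0,\ldots,s\}$ onto the non-negative integers $\{0,1,\ldots,2s\}$: non-positive $\phi_i$ are mapped to even outputs and positive $\phi_i$ to odd outputs, so the map is injective with range exhausting the target set. Since a bijection preserves Shannon entropy, $H(h_i) = H(\phi_i)$, so an asymptotically optimal entropy coder applied to the stream $\{h_i\}$ attains an average codeword length approaching $H(\phi_i)$.

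Next, after fixing a convention for $\sign(0)$, the pair $(|\phi_i|,\sign(\phi_i))$ is in one-to-one correspondence with $\phi_i$, giving $H(\phi_i) = H(|\phi_i|,\sign(\phi_i))$. Subadditivity of entropy then yields
\[
    H(h_i) \;=\; H(|\phi_i|,\sign(\phi_i)) \;\leq\; H(|\phi_i|) + H(\sign(\phi_i)),
\]
with equality iff magnitude and sign are statistically independent. Scheme (ii) entropy-codes $|\phi_i|$ and the one-bit sign on separate streams (or uses a raw sign bit), so its average rate is at least $H(|\phi_i|) + H(\sign(\phi_i))$. Combining the two relations shows that scheme (i) achieves an average codeword length no larger than scheme (ii), with strict inequality whenever the sign and magnitude exhibit any dependence, which is the typical regime for residual vectors.

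The main subtlety is making \emph{average codeword length} precise for a finite-alphabet coder: a Huffman code only satisfies $H \leq L < H+1$ per symbol, so the inequality between the two schemes could be dented by an additive constant. I would resolve this by assuming an asymptotically optimal coder (e.g., arithmetic coding over long blocks), under which the per-symbol overhead vanishes and the entropy inequality transfers directly to a codeword-length inequality. A secondary nuisance is the ambiguity of $\sign(0)$, which I would pin down to a fixed value before computing $H(\sign(\phi_i))$; any such choice only tightens the upper bound in scheme (ii), so the conclusion is insensitive to this convention.
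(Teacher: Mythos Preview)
Your argument is correct but follows a genuinely different route from the paper. The paper's proof works under Assumption~\ref{assumption:prediction} (symmetric residuals), writes $\mathbb{P}(\phi_i=j)=\mathbb{P}(\phi_i=-j)=p_j$, and interprets Scheme~(ii) exactly as in QSGD: entropy-code $|\phi_i|$ and spend one \emph{raw} bit on the sign. It then computes $L_1=H(h_i)$ and $L_2=H(|\phi_i|)+1$ explicitly and obtains the closed-form gap $L_2-L_1=p_0$, the probability mass at zero. Your approach instead uses the bijection $\phi_i\leftrightarrow h_i$ together with subadditivity, $H(|\phi_i|,\sign\phi_i)\le H(|\phi_i|)+H(\sign\phi_i)$, which requires no symmetry assumption and also covers the stronger variant of Scheme~(ii) in which the sign itself is entropy-coded rather than sent raw. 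The paper's computation buys an explicit, interpretable gap tied to the specific QSGD baseline being compared against; your argument buys generality and a clean identification of the savings as the mutual information between magnitude and sign, at the cost of not producing the closed-form $p_0$ and not exploiting the symmetry that the rest of the paper already assumes.
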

\begin{proof}
    See \paperApp{proof:lemma:sign_mapping}. 
\end{proof}

After the mapping, $\sym = [h_1, \ldots, h_{d}]^\top$ contains nonnegative integers that can be further compressed with an entropy coder.
Given the quantization error $D$ and the estimated entropy $R$, the quantizer is chosen by minimizing the Lagrangian cost function~\cite{ortega1998rate}, namely, 
\begin{equation}
    \mathcal{L} = D + \lambda \cdot R,
\end{equation}
where $\lambda$ is the Lagrangian multiplier. 
Note that the dequantization on the decoder side is the same for both quantizers $\stoquant$ and $\uniformquant$, hence there is no need to send the overhead indicating the choice of the quantizer.   
Other information such as the vector norm will be directly transmitted without compression.

\subsection{Entropy Coding}
Given a sequence of discrete symbols, the task of entropy coding is to find a mapping such that the inputs are represented with codewords that have a shorter weighted average length. 
We choose the \textit{arithmetic coding}~\cite{sayood2017introduction} in our implementation. 
Instead of mapping each symbol to a code uniquely, arithmetic coding encodes a tag that represents the cumulative probability distribution of an input random sequence. 
In practice, the tag will be set as a value located in an interval,  which is bounded by the cumulative probability distribution of the sequence.  
During the coding procedure, the algorithm refers to a predefined probability table and maintains the tag. 
The tag will be output as a binary bitstream.

In our scheme, we use the frequency of the quantized residues as an estimation of the probability table. 
The frequency information will be transmitted separately to ensure that the global server can decode the original sequence.

\begin{figure*}
\centering
\subcaptionbox{\label{fig:fmnist_loss_cost}}[0.32\textwidth]{
    \includegraphics[width=\linewidth]{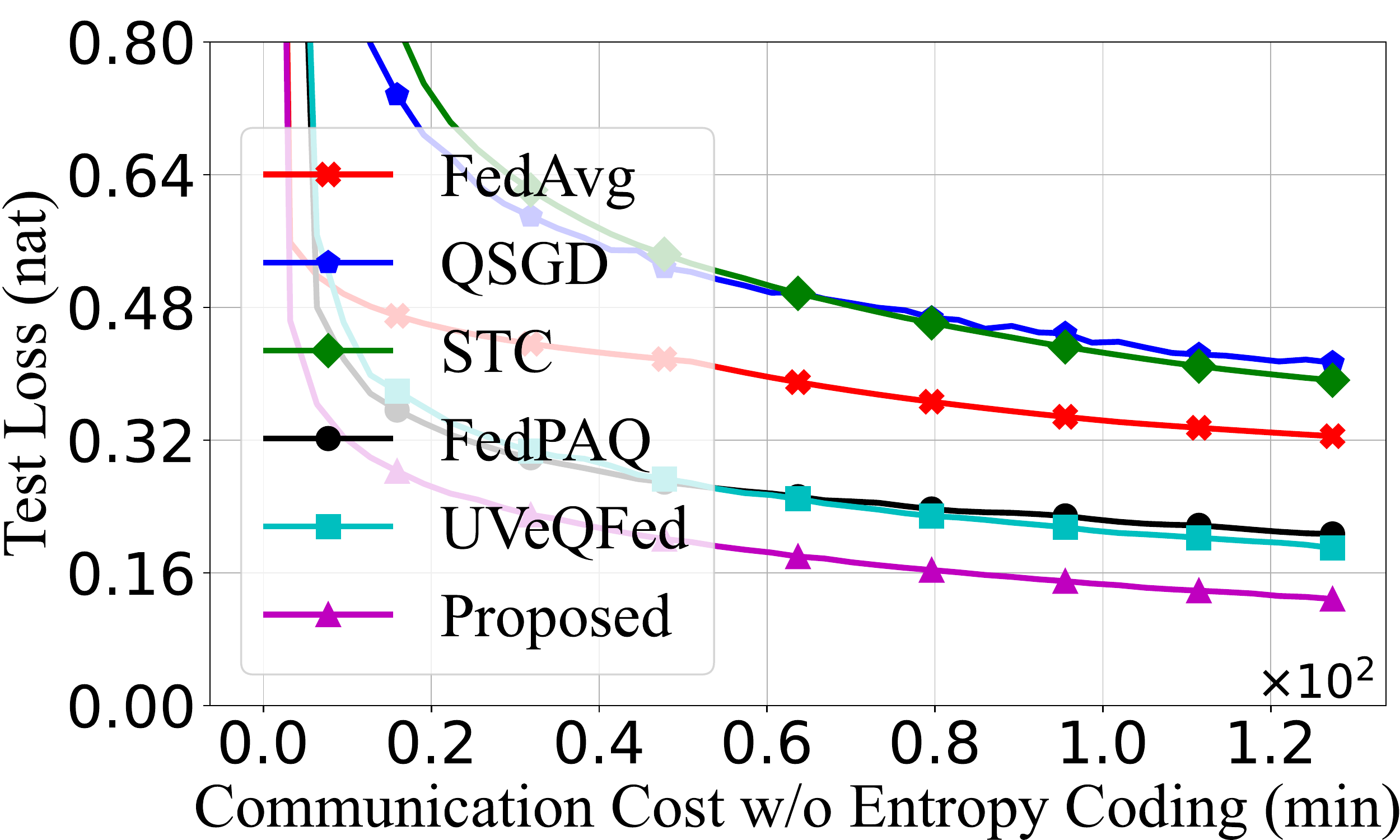}
}
\subcaptionbox{\label{fig:cifar_loss_cost}}[0.32\textwidth]{
    \includegraphics[width=\linewidth]{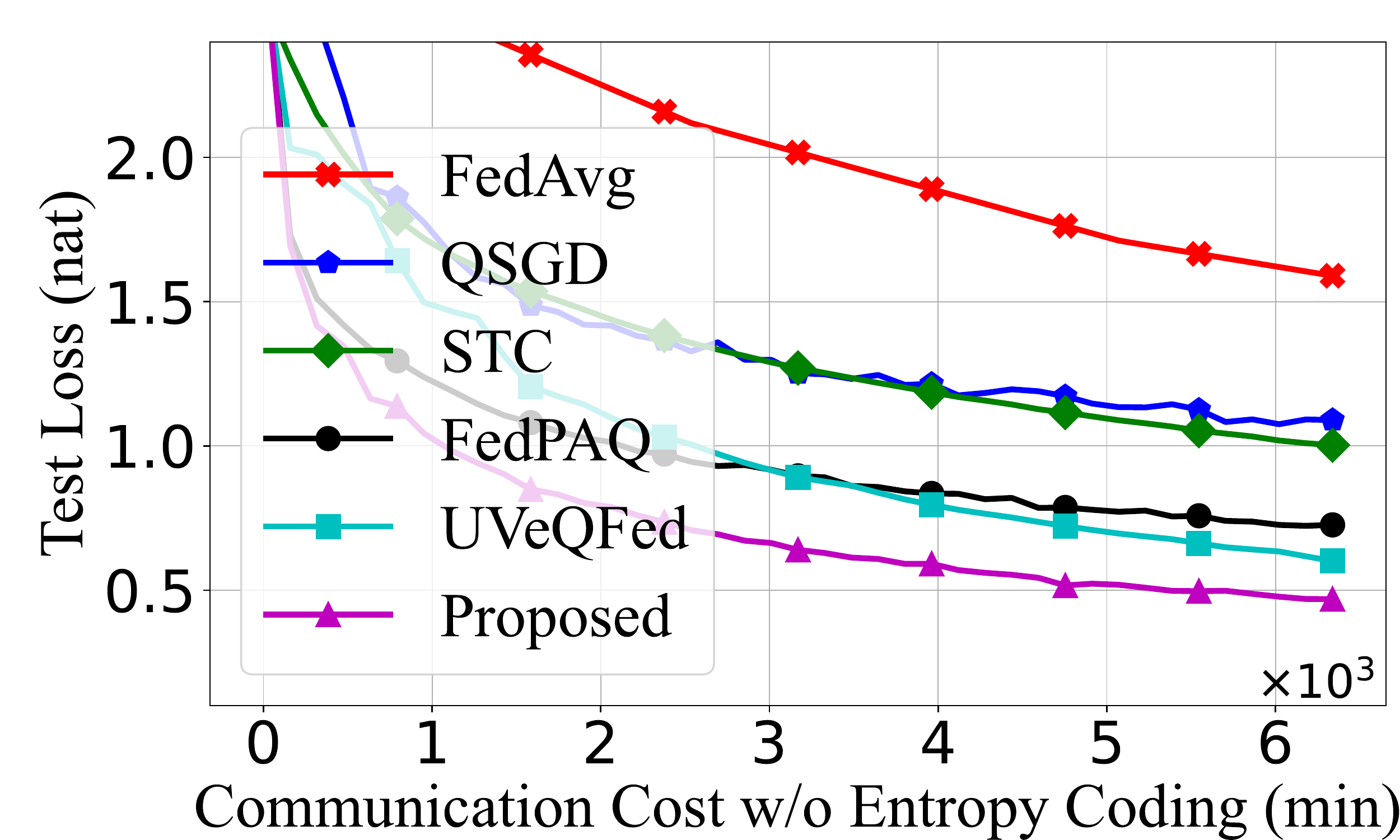}
}
\subcaptionbox{\label{fig:comm_time}}[0.32\textwidth]{
    \includegraphics[width=\linewidth]{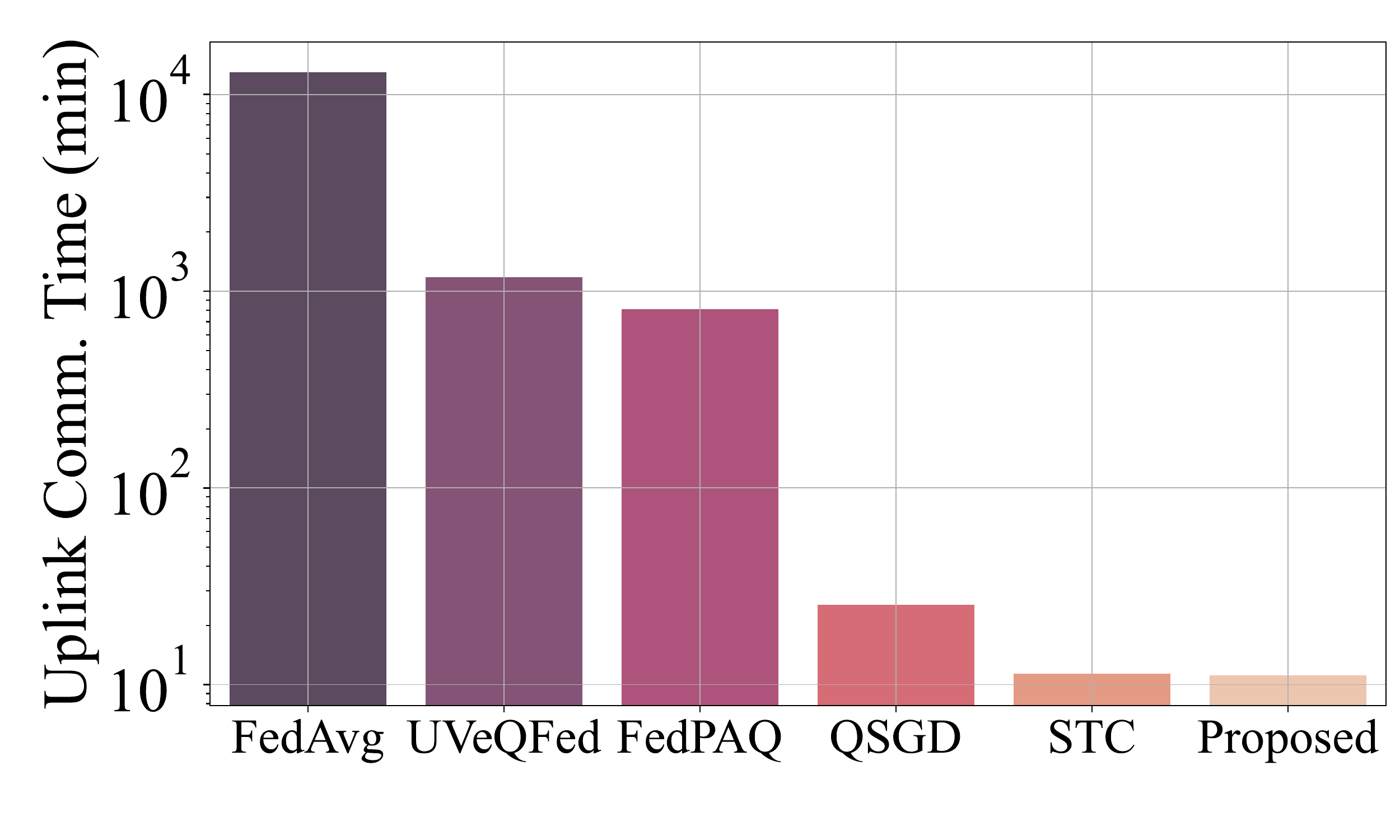}
}
\caption{
Test loss versus communication cost without entropy coding on 
(a)~the non-i.i.d. Fashion-MNIST dataset and 
(b)~the non-i.i.d. CIFAR-10 dataset.
The proposed scheme outperforms other methods by achieving the lowest test loss within the same uplink cost. 
(c)~Uplink communication time to reach $60\%$ test accuracy on the CIFAR-10 task.
The proposed scheme is advantageous in bandwidth-constrained scenarios by saving communication time drastically.} 
\label{fig:comparison}
\end{figure*}

\section{Analysis of Algorithm}\label{section:algorithm_analysis}
To simplify the notation, we first denote the stochastic local gradient described in (\ref{eq:local_update_rule}) as 
\begin{equation}
    \stograd[k,t] \triangleq \nabla f_m(\w[k,t]_{m}; \xi_{m}^{(k,t)}). 
\end{equation}
In addition, the local full batch gradient will be shortened as
\begin{equation}
    \grad[k,t] \triangleq  \nabla f_m(\w[k,t]_{m}).
\end{equation}
We state five assumptions as prerequisites for the convergence analysis. 
We assume an optimization procedure with a fixed learning rate $\eta$ for mathematical tractability. 

\subsection{Assumptions}

\begin{Assumption}\label{assumption:lower_bound}
    (Lower bound) $\forall \; \w \in \mathbb{R}^{d}$, the objective function is 
    lower bounded by a constant $f^*$
    \begin{equation}
        f(\w) \geqslant f^* = \min_{\w \in \mathbb{R}^{d}} f(\w).
    \end{equation}
\end{Assumption}
\begin{Assumption}\label{assumption:Lsmooth}
    ($L$-smoothness) \hspace*{2pt} $\forall \; \w_1, \w_2 \in \mathbb{R}^d$, $m \in [1,\,M]$,  
    there exists some nonnegative $L$ such that:
    \begin{equation}
        \|\nabla f_{m}(\w_1) - \nabla f_{m}(\w_2) \|_2 \leqslant L \, \|\w_1 - \w_2 \|_2.
    \end{equation}
\end{Assumption}
\begin{Assumption}\label{assumption:sto_grad}
    The stochastic gradients on each worker are unbiased, namely, $\mathbb{E}_{\xi} [\stograd[k,t]] = \grad[k,t]$. 
    $\forall \; k,\, t > 0$, $m \in [1,M]$, they have bounded variance, i.e., 
    \begin{equation}
        \expect \normsq{\stograd[k,t] - \grad[k,t]}  \leqslant \sigma_{\xi}^{2}, \label{sample_noise_boundvar}
    \end{equation}
    where $\sigma_{\xi}^2$ is a fixed variance independent of $k$, $t$, and $m$. 
\end{Assumption}
Assumptions \ref{assumption:lower_bound} to \ref{assumption:sto_grad} are common in literature for necessary analyses~\cite{wang2018cooperative,haddadpour2020fedcom}. 
Next, we set conditions for the prediction  residue and quantization noise.

\begin{Assumption}\label{assumption:prediction}
    $\forall\; k >0$, $m \in [1,M]$, each component of the prediction residual vector $\res[k]_{m}$ has a distribution that is symmetric about zero. 
    In addition, the variance satisfies 
    \begin{equation}
        \expect \normsq{\res[k]_{m}} \leqslant p\, \normsq{\de[k]_{m}}, 
    \end{equation}
    for some constant $p > 0$.
\end{Assumption}
Assumption \ref{assumption:prediction} is a nonstandard assumption, and we provide some intuitions as follows.
Consider the prediction mode~$1$ inspired by DPCM. 
We expect the central limit theorem to render a bell-shaped distributed residue with zero skewness.
The equivalent assumption can be found in~\cite{bernstein2018signsgd}.  
For the variance bound of $\res[k]_{m}$, we have $p=1$ according to $\res[k]_{m} = -\de[k]_{m}$ in prediction mode $1$. 
Since the predictor is selected by minimizing the prediction error, we expect the prediction error ratio $p$ to be some value between $0$ and $1$.

\begin{Assumption}\label{assumption:quantization}
    $\forall\; \vecA \in \mathbb{R}^{d}$, the variance of quantization noise satisfies
    \begin{equation}
        \expect \normsq{ Q(\vecA) - \vecA } \leqslant q\, \normsq{\vecA},
    \end{equation}
    for some constant $q > 0$.
\end{Assumption}
Assumption \ref{assumption:quantization} essentially gives a lower bound of the signal to quantization noise ratio (SQNR) of the quantization. 
The lower bound can be shown for the deterministic quantizer $\uniformquant$ when assuming a specific distribution of the input~\cite{sayood2017introduction}, 
and for the stochastic quantizer $\stoquant$~\cite{alistarh2017qsgd}.

\subsection{Convergence Analysis}
We state our analysis result on the algorithm for nonconvex optimization in the i.i.d. setting when workers have the same data distribution.  
We use the gradient norm as the indicator for convergence~\cite{wang2018cooperative}, which is a necessary condition for achieving a local minimum.
An $\epsilon$-optimal solution is achieved when the average of squared gradient norm is bounded by $\epsilon$. 

\begin{Theorem}\label{theorem:convergence}
    For Algorithm \ref{coding_algorithm}, under Assumptions \ref{assumption:lower_bound} to \ref{assumption:quantization}, if the learning rate $\eta$ satisfies
    \begin{equation}\label{eq:constraint_lr}
        \frac{L^{2} \eta^{2} \uptau(\uptau-1)}{2} + L\eta\uptau \left(\frac{qp}{M} + 1\right) \leqslant 1,
    \end{equation}
    then after $K$ rounds of communication, we have 
    \begin{multline}\label{eq:convergence_bound}
        \frac{1}{K} \sum_{k=0}^{K-1} \expect \normsq{\nabla f(\w[k])} \leqslant 
        \frac{2\left[ f\left(\w[0]\right)-f\left(\w^*\right) \right]}{\eta \uptau K} \\
        + L \eta\left(\frac{qp+1}{M} + \frac{L \eta(\uptau-1)}{2}\right) \sigma_{\xi}^{2}.  
    \end{multline}
\end{Theorem}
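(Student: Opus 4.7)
The plan is to start from the standard $L$-smoothness descent inequality applied to consecutive global iterates, take expectations, and bound each error source separately before telescoping over communication rounds. The first step is to write the server update in a tractable form: since $\hw[k,\uptau]_m = \pw[k,\uptau]_m + Q(\res[k]_m) = \w[k,\uptau]_m + \quanterr[k]_m$, where $\quanterr[k]_m \triangleq Q(\res[k]_m) - \res[k]_m$ is the quantization noise, the global update becomes
\begin{equation*}
\w[k+1] - \w[k] \;=\; -\eta \sum_{t=0}^{\uptau-1} \frac{1}{M}\sum_{m=1}^{M} \stograd[k,t] \;+\; \frac{1}{M}\sum_{m=1}^{M} \quanterr[k]_m.
\end{equation*}
Applying Assumption~\ref{assumption:Lsmooth} to $f$ (globally $L$-smooth as an average of $L$-smooth $f_m$), I would obtain $f(\w[k+1]) - f(\w[k]) \leqslant \langle \nabla f(\w[k]),\, \w[k+1]-\w[k]\rangle + \tfrac{L}{2}\|\w[k+1]-\w[k]\|_2^2$ and then take expectations over the mini-batch samples and the (possibly stochastic) quantizer.

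The second step is to handle the cross term. Using unbiasedness of $\stograd[k,t]$ (Assumption~\ref{assumption:sto_grad}) and the symmetry in Assumption~\ref{assumption:prediction} together with the odd symmetry of the quantizer, the quantization noise has zero conditional mean, so the cross term reduces to $-\eta \sum_t \frac{1}{M}\sum_m \expect\langle \nabla f(\w[k]),\, \grad[k,t]\rangle$. I would apply the identity $2\langle a,b\rangle = \|a\|^2 + \|b\|^2 - \|a-b\|^2$ to split this into (i) a negative $\|\nabla f(\w[k])\|^2$ term that drives descent, (ii) a nonnegative $\|\grad[k,t]\|^2$ term to be absorbed later, and (iii) a client-drift term $\|\nabla f(\w[k]) - \grad[k,t]\|^2 \leqslant L^2 \|\w[k] - \w[k,t]_m\|_2^2$ by $L$-smoothness.

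The third step is to bound the squared-norm term $\expect\|\w[k+1]-\w[k]\|_2^2$. I would split it via $\|a+b\|^2 \leqslant 2\|a\|^2 + 2\|b\|^2$ into a stochastic-gradient part (handled by Assumption~\ref{assumption:sto_grad} giving a $\sigma_\xi^2/M$ contribution after using the independence of workers) and a quantization-error part. For the latter, chaining Assumption~\ref{assumption:quantization} with Assumption~\ref{assumption:prediction} gives $\expect\|\quanterr[k]_m\|_2^2 \leqslant q\,\expect\|\res[k]_m\|_2^2 \leqslant qp\,\expect\|\de[k]_m\|_2^2$, and $\|\de[k]_m\|_2^2 = \eta^2\|\sum_t \stograd[k,t]\|_2^2$ can again be bounded by $\eta^2\uptau\sum_t \expect\|\grad[k,t]\|_2^2$ plus a stochastic-variance residual. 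Separately, I would derive the standard client-drift recursion $\expect\|\w[k,t]_m - \w[k]\|_2^2 \leqslant \eta^2 t\sum_{s=0}^{t-1}\expect\|\grad[k,s]\|_2^2 + \eta^2 t\sigma_\xi^2$ via unrolling \eqref{eq:local_update_rule}.

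The main obstacle, and the step that requires the most care, is collecting the coefficients of the $\sum_t \expect\|\grad[k,t]\|_2^2$ terms that arise from (ii), from the drift expansion, and from the quantization-noise expansion, and showing they are absorbed by the descent term. The learning-rate condition \eqref{eq:constraint_lr} is exactly what makes the combined coefficient nonpositive: the $L\eta\uptau(qp/M + 1)$ piece absorbs the quadratic and quantization contributions, while $L^2\eta^2\uptau(\uptau-1)/2$ absorbs the client-drift contribution. Once this coefficient is shown to be $\leqslant 0$, rearranging yields $\tfrac{\eta\uptau}{2}\expect\|\nabla f(\w[k])\|_2^2 \leqslant \expect[f(\w[k])-f(\w[k+1])] + \tfrac{L\eta^2}{2}(qp+1)\sigma_\xi^2/M + \tfrac{L^2\eta^3\uptau(\uptau-1)}{4}\sigma_\xi^2$. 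Summing from $k=0$ to $K-1$, telescoping the function-value differences against $f^\star$ from Assumption~\ref{assumption:lower_bound}, dividing by $K$ and by $\eta\uptau/2$, and grouping the $\sigma_\xi^2$ terms produces precisely \eqref{eq:convergence_bound}.
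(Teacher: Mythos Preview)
Your overall architecture mirrors the paper's proof closely: start from $L$-smoothness, use unbiasedness of the quantization noise to kill the cross term, expand the inner product via $2\langle a,b\rangle=\|a\|^2+\|b\|^2-\|a-b\|^2$, control the drift $\expect\|\w[k,t]_m-\w[k]\|^2$ exactly as you state, and absorb the $\sum_t\expect\|\grad[k,t]\|^2$ terms via the learning-rate condition. So the plan is sound.

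There is, however, one step that is too loose to land on the stated constants. For the quadratic term $\expect\|\w[k+1]-\w[k]\|^2$ you propose the crude split $\|a+b\|^2\leqslant 2\|a\|^2+2\|b\|^2$. The paper instead uses the \emph{exact} variance decomposition: since you already argued that the quantization noise $\quanterr[k]_m$ has zero conditional mean, the cross term between the gradient part and the noise part vanishes in expectation, giving
\[
\expect\normsq{\w[k+1]-\w[k]} \;=\; \expect\normSQ{\frac{\eta}{M}\sum_{m,t}\stograd[k,t]} \;+\; \expect\normSQ{\frac{1}{M}\sum_m \quanterr[k]_m},
\]
with no factor of $2$. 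Your split would produce coefficients of the form $2(qp/M+1)$ and $2(qp+1)/M$ in place of $(qp/M+1)$ and $(qp+1)/M$, so neither the learning-rate condition \eqref{eq:constraint_lr} nor the final bound \eqref{eq:convergence_bound} would come out exactly as claimed. The fix is immediate: reuse the zero-mean property you invoked for the linear term also here.

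A minor bookkeeping point: in your penultimate display, the variance term should carry a factor $\uptau$, i.e.\ $\tfrac{L\eta^2\uptau}{2M}(qp+1)\sigma_\xi^2$ rather than $\tfrac{L\eta^2}{2M}(qp+1)\sigma_\xi^2$; otherwise the division by $\eta\uptau/2$ at the end does not reproduce \eqref{eq:convergence_bound}.
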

\begin{proof}
See \paperApp{appendix:proof_convergence}.
\end{proof}

\begin{Remark}
Suppose we have an ideal predictor with the prediction error ratio coefficient $p$ close to zero, or equivalently we have an ideal  quantizer with the quantization error ratio coefficient $q$ close to zero, then Theorem \ref{theorem:convergence} recovers the result obtained in~\cite{wang2018cooperative}.
\end{Remark}
\begin{Remark}\label{remark:predict_quant}
In our proposed scheme, prediction and quantization operation have an interactive effect on the convergence rate, as it is reflected in the term $qp \frac{L\eta}{M} \sigma^2_{\xi}$ in~\eqref{eq:convergence_bound}.  
Suppose the prediction error ratio satisfies $p \in (0,1)$, then the convergence bound in \eqref{eq:convergence_bound} is tighter than gradient quantization methods such as FedPAQ~\cite{reisizadeh2020fedpaq}. 
The well-designed predictors allow fast convergence even in the coarse quantization scenario. 
\end{Remark}
\begin{Remark}\label{remark:eta_tau}
The learning rate $\eta$ or the number of local iterations $\uptau$ should not be large. 
Otherwise, the prediction will not be accurate. 
To see this, we replace $q$ with $d^{1/2}$ and replace $p$ with $p + \delta_{p}$. 
This brings an additive term $\frac{L\eta }{M} d^{\frac{1}{2}}\delta_{p}\, \sigma^2_{\xi}$ to the upper bound in \eqref{eq:convergence_bound},
thus negatively affecting the convergence and the prediction in succeeding communication rounds. 
\end{Remark}

Adding the uniform bounds for the gradients or gradient dissimilarity provides an analysis tool for non-i.i.d. settings~\cite{li2020on, karimireddy2020scaffold}, which we leave for future work. 
The simulation results with the heterogeneous data distributions are discussed in Section \ref{section:experiment}.   

\section{Experimental Results}\label{section:experiment}

\begin{figure*}
\centering
\subcaptionbox{\label{subfig:loss_modes}}[0.325\linewidth]{
    \includegraphics[width=\linewidth]{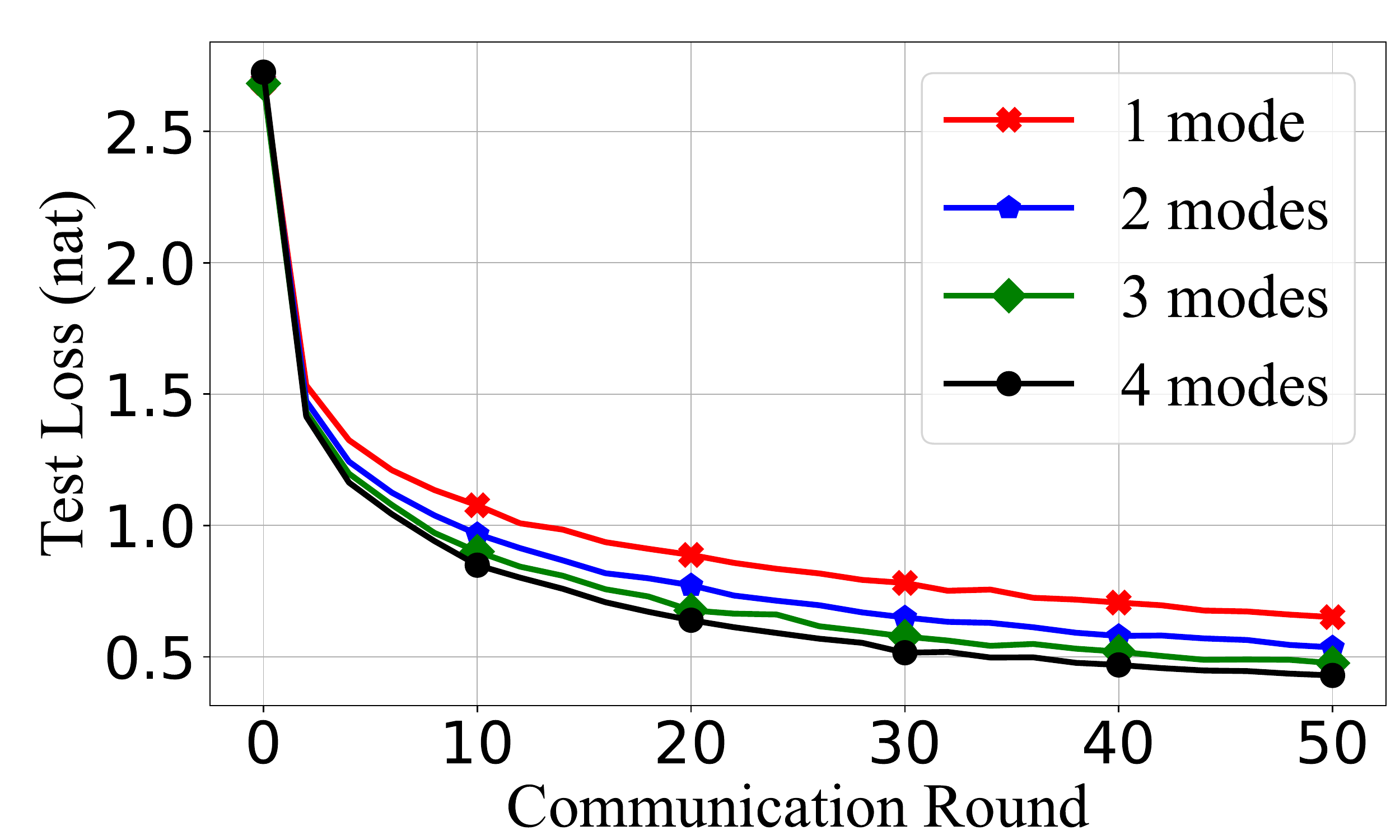}
}
\subcaptionbox{\label{subfig:freq_modes}}[0.325\linewidth]{
    \includegraphics[width=\linewidth]{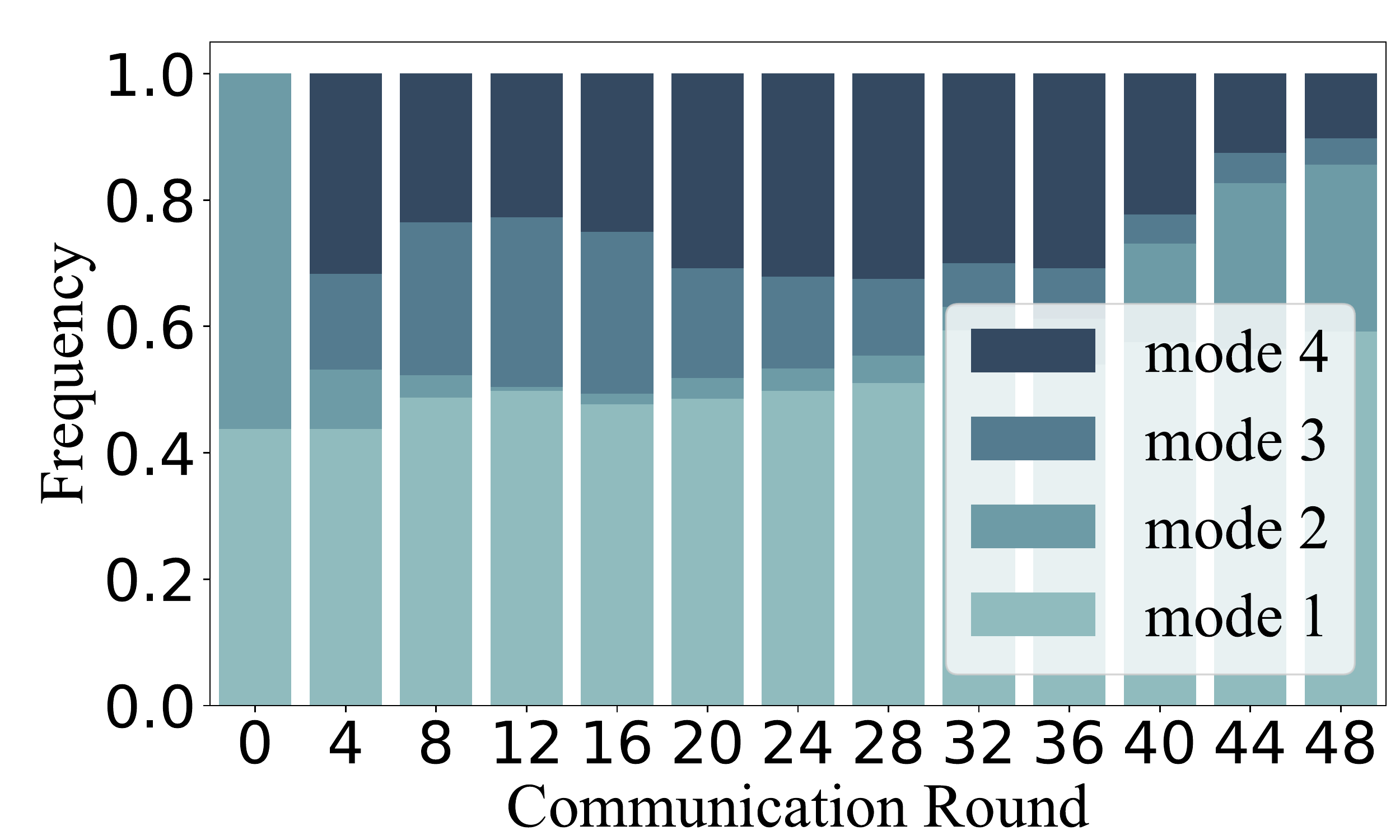}
}
\subcaptionbox{\label{fig:effect_tau}}[0.325\linewidth]{
    \includegraphics[width=\linewidth]{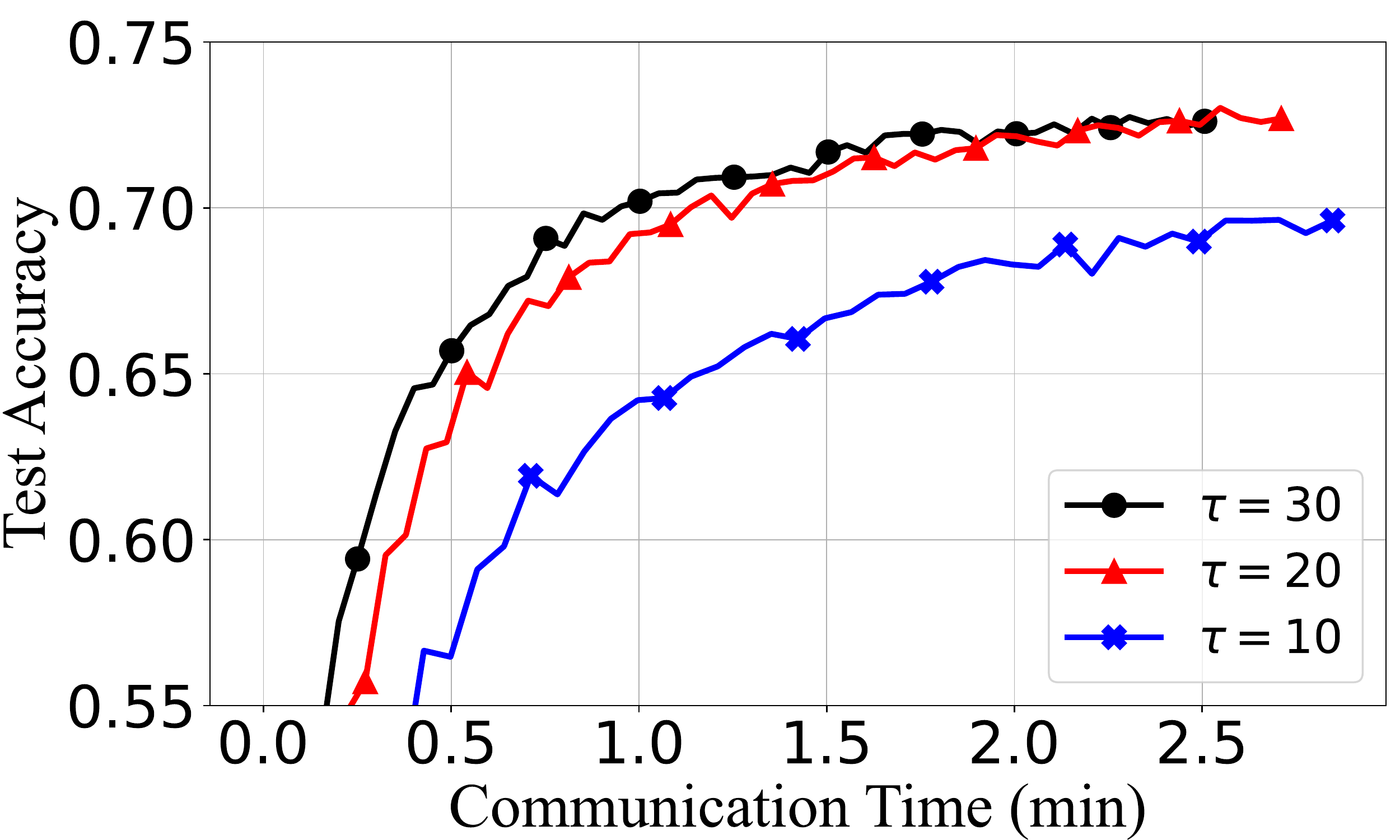}
}
\caption{
(a)~Test loss versus communication round with different numbers of prediction modes $N$. 
By increasing $N$, the residue error is reduced and the model will converge faster. 
(b)~In our proposed scheme, we use four modes and plot the frequency of different modes used during each communication round. 
(c)~Test accuracy versus communication time with different number of local iterations.
Given the communication time, $\uptau = 30$ gives the best test accuracy.  
}
\end{figure*}

\begin{table}[!tb]
    \centering
    \caption{\sc Simulation Parameters for Wireless Communication Channels }\label{table:sim_params}
    \begin{tabular}{cccc}
        \toprule
        Parameters & Value & Parameters & Value \\ \midrule
        $P_m$ & $0.01$ W  & $B$ & $2$ MHz \\ 
        $N_0$ & $-174$ dBm/Hz & $A_d$ & $4.11$ \\
        $f_c$ & $915$ MHz & $d_e$ & $2.8$ \\ \bottomrule
    \end{tabular}
\end{table}

For the simulations, we assume a circular area with a radius of $500$ meters, with one central server at its center and 30 uniformly distributed workers. 
The simulation parameters are listed in \papertab{table:sim_params}. 
We consider concurrent communication and evaluate the longest transmission time as the uplink communication cost.

\highlight{Model and Datasets.} We consider the learning tasks of image recognition with two datasets, namely, Fashion-MNIST~\cite{xiao2017fashion} and CIFAR-10~\cite{krizhevsky2009learning}.
Both of them  contain $C=10$ categories for classification. 
We followed Hsu et al. \cite{hsu2019measuring} and used the Dirichlet distribution to simulate the non-i.i.d. worker data.  
In particular, for the $m$th worker, we drew a random vector $\boldsymbol{q}_m \sim \text{Dir}(\alpha)$,  
where $\boldsymbol{q}_m = [q_{m,1}, \cdots, q_{m,C}]^{\top}$ belongs to the $(C-1)$-standard simplex.  
We then assigned the images to the worker, where the number of the $k$th class is proportional to $q_{m,k}$.   
We set $\alpha=0.5$ throughout the experiments. 
Since the non-i.i.d. setting is a defining characteristic for federated learning~\cite{mcmahan2017communication}, we focused on this non-i.i.d. data partition in the simulation.
More implementation details can be found in \paperApp{appendix:implementation}.

\highlight{Convergence.} 
We compare the proposed scheme with FedAvg\cite{mcmahan2017communication}, QSGD~\cite{alistarh2017qsgd}, STC~\cite{sattler2019robust}, FedPAQ~\cite{reisizadeh2020fedpaq}, and UVeQFed~\cite{shlezinger2020uveqfed}.
In the simulation, QSGD and FedPAQ adopt the stochastic scalar quantization and each entry is represented with $2$ bits. 
STC combines gradient sparsification and ternary quantization, where the sparsity rate is set to $1/400$. 
UVeqFed leverages the two-dimensional hexagonal lattice for gradient vector quantization, in which we set the quantization rate to $2$. 
We first remove the entropy coding modules in all methods and evaluate the test loss reduction over cumulative communication cost in \paperfig{fig:comparison}. 
It can be observed from \paperfig{fig:fmnist_loss_cost} and \paperfig{fig:cifar_loss_cost} that the proposed scheme outperforms all of them by achieving the lowest test loss given the same uplink cost. 
A more detailed comparison with the best two of these baseline algorithms reveals that the proposed scheme reduces the test loss by $30 \%$ compared to FedPAQ and UVeQFed on the Fashion-MNIST task given the communication cost of $0.8\times 10^{2}$--$1.2\times 10^{2}$ minutes,  
On the CIFAR-10 task, the proposed scheme reduces the test loss by $27 \%$ and $15 \%$ compared to FedPAQ and UVeQFed, respectively, given the communication cost of $4\times 10^{3}$--$6\times 10^{3}$ minutes.   
The results confirm our analysis in Remark 2, indicating that the interactive effect from the prediction and quantization improves the algorithm performance. 
In \paperfig{fig:comm_time}, we show the communication cost of different methods on the CIFAR-10 task for a test accuracy of $60 \%$. 
Compared to prior works such as FedAvg and UVeqFed, the proposed scheme can reduce the communication cost by two orders of magnitude.

\begin{table}[tb]
    \centering
    \caption{\sc Accuracy and Compression Ratio with Different Local Iterations}
    \label{table:effect_tau}
    \begin{tabular}{p{0.06\textwidth}p{0.09\textwidth}p{0.09\textwidth}p{0.09\textwidth}}
        \toprule
        Local Steps & Test Accuracy ($\%$) &  Compression Ratio & Communication Time (min) \\ \midrule
        $\uptau=10$ & $70.2 \pm 0.1 $ & $901\times$ & $3.5$  \\[3pt]
        $\uptau=20$ & $72.7 \pm 0.1 $ & $1183\times$ &  $2.7$ \\[3pt]
        $\uptau=30$ & $72.6 \pm 0.2 $ & $1279\times$  &  $2.5$ \\ \bottomrule
    \end{tabular}    
\end{table}

\highlight{Prediction Modes.}  In this experiment, we study the influence of the number of prediction modes. 
We train the model on CIFAR-10 for $50$ rounds and choose the number of predictor candidates $N$ from $1$ to $4$. 
FedPAQ can be viewed as a special case of our proposed scheme with one fixed prediction mode. 
The learning curves are plotted in \paperfig{subfig:loss_modes}. 
It can be observed that using more prediction modes can accelerate the learning. 
The observations confirm our analysis in Lemma~\ref{lemma:multimode_prediction} that increasing the number of prediction modes reduces the prediction error and hence accelerates the convergence rate.
For our proposed four-mode scheme, we visualize the different prediction modes selection frequency in \paperfig{subfig:freq_modes}. 
We note that mode 1 is a good candidate for the predictor, as its selection frequency is consistently around $0.5$.

\highlight{Local Iterations.} 
We study the effect of the number of local iterations $\uptau$. 
As $\uptau$ increases, the model update will have a larger variance and thus be difficult to predict. 
On the other hand, increased local computations will accelerate the model convergence to a certain extent, as we have shown in the first term of convergence bound~\eqref{eq:convergence_bound}.
A larger variance also means that the norm of the model updates is going to increase and the gradient will be sparser after quantization. 
We train the model on CIFAR-10 and select $\uptau$ from $\{10, 20, 30\}$.  
The learning curves are plotted in \paperfig{fig:effect_tau} and the training results after $50$ communication rounds are shown in \papertab{table:effect_tau}.  
It can be observed that when fixing communication time, a larger $\uptau$ can result in a higher test accuracy value. 
On the other hand, $\uptau=20$ gives the best test accuracy. 
The observations confirm our analysis that a larger $\tau$ can improve the communication efficiency by exploiting local computation at the workers, but may not necessarily improve the model accuracy due to the increased prediction errors, as we have pointed out in Remark~\ref{remark:eta_tau}.

\highlight{Quantization Levels.} 
In this experiment, we study how the number of quantization levels $2s + 1$ affects the learning procedure.  
As $s$ increases, the quantization error will be reduced and the model can achieve smaller errors based on \papertheorem{theorem:convergence}.
We train the model on CIFAR-10 and show the training results after $50$ communication rounds in \papertab{table:effect_s}. 
It can be observed that a larger $s$ improves the model accuracy at the cost of higher communication cost because of the increased quantization precision. 

\begin{table}[tb]
    \centering
    \caption{\sc Accuracy and Compression Ratio with Different Quantization Levels}
    \label{table:effect_s}
    \begin{tabular}{p{0.08\textwidth}p{0.09\textwidth}p{0.09\textwidth}p{0.09\textwidth}}
        \toprule
        Quantization Parameter & Test Accuracy ($\%$) & Compression Ratio & Communication Time (min) \\ \midrule
        $s=1$ & $72.7 \pm 0.2 $ &   $1183\times$ &  $2.7$ \\[3pt]
        $s=2$ & $74.1 \pm 0.2 $ &   $653\times$  &  $4.9$ \\[3pt]
        $s=3$ & $74.8 \pm 0.1 $ &   $463\times$  &  $6.9$  \\ \bottomrule
    \end{tabular}    
\end{table}

\section{Conclusion}\label{conclusion_section}
In this paper, we have focused on the communication-efficient federated learning and have proposed a predictive coding-based compression scheme. 
To the best of our knowledge, we are among the first to solve the task by jointly leveraging different compression tools, including predictive coding, quantization, and entropy coding.  
We have designed different prediction functions and let the worker choose the predictor dynamically to improve the compression performance. 
Our proposed scheme can significantly reduce the required bandwidth and communication cost and achieve better performance compared with other baselines, which has been confirmed by our empirical study. 

\makeatletter
\renewcommand{\subsection}{
\@startsection{subsection}{2}{0mm}
{-\baselineskip}
{0.5\baselineskip}{}}
\makeatother

\setcounter{Lemma}{0}
\setcounter{Theorem}{0}


\appendices

\section{Proof of Lemma \ref{lemma:multimode_prediction}}\label{proof:multimode_prediction}

\begin{proof}
        With the mode selection scheme, the final prediction error is calculated as $Y_{N} = \min (X_1, \dots, X_{N})$. 
        The cumulative density function (cdf) of $Y_{N}$ is 
        \begin{equation}
            F_{Y_N}(c) = 1 - \prod_{i=1}^N \prob[X_i \geqslant c] = 1 - \prod_{i=1}^N  [1 - F_{X_i}(c) ]. 
        \end{equation}
        Since $X_i$ is nonnegative, the complement cdf $1 - F_{X_i}(c)$ has $1$ for $c<0$ and is monotonic decreasing for $c>0$. 
        If one increases $N$ to $N+1$, $\prod_{i=1}^{N+1}  [1 - F_{X_i}(c) ]$ will be more discounted.
        Hence, $F_{Y_N}(c)$'s curve is rising, which tells us that the expectation
        \begin{equation}
            \expect[Y_N] =\int_{0}^{\infty}\left[ 1-F_{Y_N}(c) \right] \diff c
        \end{equation}
        is decreasing as $N$ increases.
\end{proof}

We empirically plot the prediction error versus the number of prediction modes $N$ in \paperfig{fig:error_N}. 
For simplicity, we assume that $X_i \sim \mathcal{N}(\mu, \sigma^2)$ and set $\mu=400$ and $\sigma=30$ in the simulation so that it is nearly impossible for $X_i$ to be negative. 
Each point on the curve is obtained by averaging over $1\times 10^{4}$ repetitions.
It can be observed that prediction error decreases as the number of predictor candidates increases.
Such decrease is the fastest for a small $N$.

\begin{figure}[tb]
    \centering
    \includegraphics[width=0.48\textwidth]{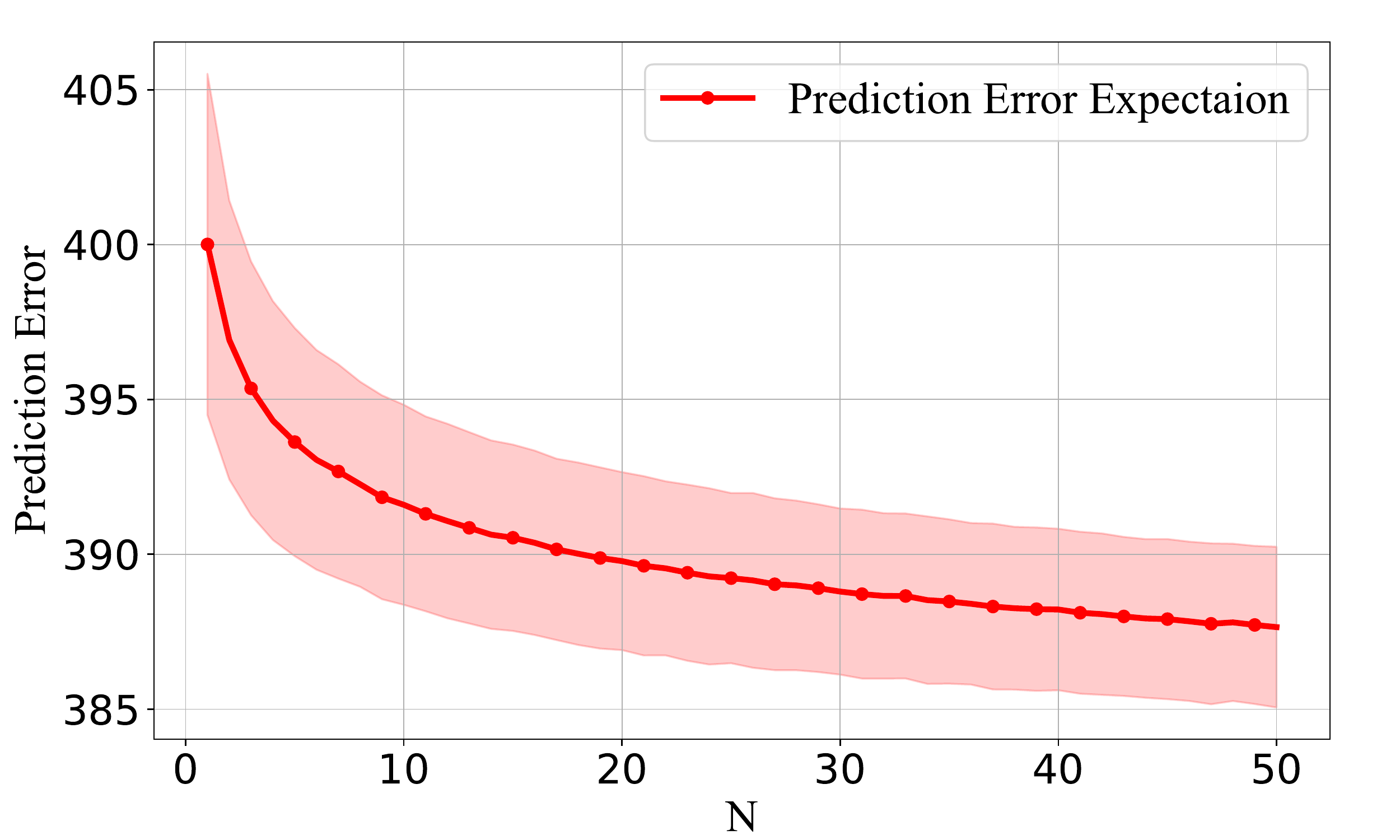}
    \caption{
    Prediction error expectation versus the number of predictor candidate modes $N$.
    The predictor error is simulated as $X_i \sim \mathcal{N}(\mu, \sigma^2)$ with $\mu=400$ and $\sigma=30$.
    The solid curve represents the mean value and the shaded region spans one standard deviation obtained over $1\times 10^{4}$ repetitions.
    The error expectation is a monotonically decreasing function of $N$, which is consistent with our analysis.   
    }
    \label{fig:error_N}
\end{figure}

\section{Proof of Lemma \ref{lemma:sign_mapping}}\label{proof:lemma:sign_mapping}

\begin{Lemma}
    Define $\phi_i \triangleq \sign(e_i) \cdot \varphi_i(\res, s). $
    The mapping defined in \eqref{eq:quant_map} can be rewritten as 
    \begin{equation}
        h_i = \left\{
        \begin{array}{l @{\;,\;} l}
            -2\, \phi_i   & \textrm{ if } \sign(\phi_i) \leqslant 0, \\
            2 \,  \phi_i - 1 & \textrm{ if } \sign(\phi_i) > 0.
        \end{array}
        \right.
    \end{equation}
    Consider following nonnegative integer encoding schemes: 
    (i) use the mapping defined in \eqref{eq:re_quant_map} and then encode $\phi_i$; 
    (ii) separately encode $|\phi_i|$ and $\sign (\phi_i)$.
    Scheme (i) has a shorter average codeword length.   
\end{Lemma}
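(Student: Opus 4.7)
The plan is to split the argument into two parts: first verify the equivalence between the case split in~\eqref{eq:quant_map} and the compact rewriting in~\eqref{eq:re_quant_map}, then compare the two coding schemes by reducing the average-codeword-length comparison to an entropy comparison, which is legitimate because arithmetic coding is adopted in Section~\ref{coding_framework_section}.

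For the rewriting, I would proceed by cases on $\sign(e_i)$. Since $\varphi_i(\res,s)$ is an unsigned quantization level, hence always nonnegative, we have $|\phi_i| = \varphi_i(\res,s)$ and, when $\varphi_i(\res,s) > 0$, $\sign(\phi_i) = \sign(e_i)$. Substituting $\phi_i = -\varphi_i(\res,s)$ into the top branch of~\eqref{eq:quant_map} and $\phi_i = +\varphi_i(\res,s)$ into the bottom branch reproduces the two branches of~\eqref{eq:re_quant_map}; the boundary $\varphi_i(\res,s) = 0$ collapses both formulas to $h_i = 0$.

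For the codeword-length comparison, the first observation is that $\phi_i \mapsto h_i$ is a bijection between $\{-s,\ldots,s\}$ and $\{0,1,\ldots,2s\}$, so $h_i$ and $\phi_i$ share the same source law and $H(h_i) = H(\phi_i) = H(|\phi_i|, \sign(\phi_i))$. Since both schemes use arithmetic coding, whose average codeword length converges to the source entropy, it suffices to compare entropies. Scheme~(i) attains $H(|\phi_i|, \sign(\phi_i))$; scheme~(ii) encodes the two streams separately and attains $H(|\phi_i|) + H(\sign(\phi_i))$. Subadditivity of entropy,
\begin{equation*}
H(|\phi_i|) + H(\sign(\phi_i)) \geqslant H(|\phi_i|, \sign(\phi_i)),
\end{equation*}
already yields the claimed non-strict inequality, with equality iff $|\phi_i|$ and $\sign(\phi_i)$ are independent.

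The main obstacle is ruling out independence, i.e., showing the gap is strict for any nontrivial source. Here I would invoke Assumption~\ref{assumption:prediction}: because each entry of $\res$ is symmetric about zero and the quantizers in Section~\ref{quantization_subsection} depend on $e_i$ only through $|e_i|$, we have that $\sign(\phi_i)$ is $\pm 1$ with equal probability conditional on $|\phi_i| = k > 0$, while $\sign(\phi_i) = 0$ deterministically conditional on $|\phi_i| = 0$. These two conditional laws differ whenever $p_0 := P(|\phi_i| = 0) \in (0,1)$, so $|\phi_i|$ and $\sign(\phi_i)$ are not independent. A short bookkeeping calculation then evaluates the mutual information in closed form, giving the per-symbol saving of scheme~(i) over scheme~(ii) as the binary entropy $H_b(p_0)$, which is strictly positive in the regime of interest.
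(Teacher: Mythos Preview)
Your argument is correct but follows a different route from the paper. The paper does not verify the rewriting (you do, and your case analysis is fine) and then proceeds by brute-force computation: it writes out $H(h_i)$ and $H(|\phi_i|)$ explicitly under the symmetry of Assumption~\ref{assumption:prediction}, models scheme~(ii) as spending exactly one raw bit on the sign (the QSGD convention the lemma is meant to be compared against), and subtracts to obtain $L_2-L_1=p_0>0$. You instead model scheme~(ii) as arithmetic-coding the ternary sign stream to its marginal entropy, reduce the comparison to subadditivity $H(|\phi_i|)+H(\sign\phi_i)\geqslant H(|\phi_i|,\sign\phi_i)$, and identify the slack as the mutual information $I(|\phi_i|;\sign\phi_i)=H_b(p_0)$. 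Both arguments are valid; the quantitative gaps differ ($p_0$ versus $H_b(p_0)$) only because you and the paper interpret ``separately encode $\sign(\phi_i)$'' differently. Your approach is more conceptual and exposes \emph{why} joint coding helps---the sign and magnitude are dependent precisely through the event $\{\phi_i=0\}$---while the paper's direct calculation ties the result to the concrete one-bit sign convention of the baseline it is benchmarking against.
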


\begin{proof}
    Consider $|\phi_i| \in \{0, 1, \cdots, J\}$. 
    We define the probability mass function of $|\phi_{i}|$ as follows:
    \begin{subnumcases}{\mathbb{P}(|\phi_i| = j) =}
         2 \, p_{j}, & \text{if } $j \neq 0$, \\
         p_0, & \text{otherwise }. 
    \end{subnumcases}
    The probabilities sum up to $1$, which yields
    \begin{equation}\label{eq:pmf_sumup_one}
        \sum_{j=1}^{J} 2 p_{j} + p_{0} = 1. 
    \end{equation}
    Under Assumption \ref{assumption:prediction}, it can be shown that
    \begin{equation}
        \mathbb{P} (\phi_i = j) = \mathbb{P} (\phi_i = -j) = p_{j}. 
    \end{equation}
    We use the entropy lower bound to estimate average codeword length.
    For Scheme (i) which maps $\phi_i$ to $h_i$ with \eqref{eq:re_quant_map}, the entropy of $h_i$ is calculated as  
    \begin{equation}
        H(h_i) =  -\left(\sum_{j=1}^{J} 2 p_{j} \log  p_{j} + p_{0} \log p_{0}\right).
    \end{equation}
    The average codeword length is estimated as $L_1 =  H(h_i)$.
    For Scheme (ii), the entropy of $|\phi_i|$ is calculated as 
    \begin{equation}
        H(|\phi_i|) =  -\left(\sum_{j=1}^{J} 2 p_{j} \log 2 p_{j} + p_{0} \log p_{0}\right).
    \end{equation}
    The average codeword length is estimated as $L_2 = H(|\phi_i|) + 1$. Here, the extra $1$ bit comes from the sign of $|\phi_i|$. 
    Taking the difference between $L_2$ and $L_1$ yields 
    \begin{subequations}
    \begin{align}
        L_2 - L_1 & = -\sum_{j=1}^{J} 2 p_{j} \log _{2} \frac{2 p_{j}}{p_{j}} + 1 \\
        &  \overset{\cirone}{=} p_0 > 0. 
    \end{align}
    \end{subequations}
    where $\cirone$ follows from \eqref{eq:pmf_sumup_one}. 
    The proof is complete. 
\end{proof}
\section{Proof of \papertheorem{theorem:convergence}}\label{appendix:proof_convergence}

\begin{proof}
For clarity, we first have a brief review of some key points in Algorithm \ref{coding_algorithm}. 
We denote the global weight difference as 
\begin{equation}
    \hde[k] \triangleq \w[k] - \w[k+1]. 
\end{equation}
According to \eqref{def:compressed_agg}, we have 
\begin{equation}\label{eq:hatdelta_expansion}
    \hde[k] =  \w[k] - \frac{1}{M} \sum_{m=1}^M \hw[k,\uptau]_{m}. 
\end{equation}
Substituting the decoding procedure $\hw[k+1]_{m} \leftarrow \pw[k+1] +  \hres[k]_{m}$ into \eqref{eq:hatdelta_expansion} yields 
\begin{equation}
    \hde[k] = \w[k]  - \frac{1}{M} \sum_{m=1}^M \left( \pw[k,\uptau]_{m} + \hres[k]_{m} \right) .
\end{equation}
We then take the expectation of $\hde[k]$.
Under Assumption \ref{assumption:prediction}, it can be shown that the quantization is unbiased, i.e.,
\begin{equation}\label{eq:quant_unbiased}
\mathbb{E}\left[ \hres[k]_{m} \Big| \res[k]_{m} \right] = \res[k]_{m},
\end{equation}
we can obtain
\begin{equation}\label{eq:def_global_delta}
    \de[k] \triangleq \mathbb{E}\left[ \hde[k] \Big| \w[k] \right] = \w[k] - \frac{1}{M} \sum_{m=1}^{M} \w[k,\uptau]_{m}. 
\end{equation}

From Assumption \ref{assumption:Lsmooth}, we have 
\begin{multline}
    \mathbb{E} \left[f(\w[k+1]) - f(\w[k])\right]  \leqslant \\
    - \underbrace{\mathbb{E} \innprod{\nabla f(\w[k])}{\hde[k]}}_{T_1} 
    + \frac{L}{2}  \underbrace{\expect \normsq{\hde[k]}}_{T_2}. \label{eq:original_inequality}
\end{multline}  
We first consider the bound for $T_1$:
\begin{subequations}
\begin{align}
  &\; \expect \innprod{\nabla f(\w[k])}{\hde[k]} \\
\overset{\cirone}{=} &\; \expect \innerprod{\nabla f(\w[k])}{\de[k]} \\
= &\; \expect \innerprod{\nabla f(\w[k])}{\frac{\eta}{M} \sum_{m=1}^{M}\sum_{t=0}^{\uptau-1} \nabla f_{m}(\w[k, t]) } \\
\overset{\cirtwo}{=} &\; \frac{\eta}{2 M} \sum_{m=1}^{M} \sum_{t=0}^{\uptau-1} 
       \Big(\expect \normsq{\nabla f(\w[k])} + \expect \normsq{\nabla f(\mathbf{w}_{m}^{(k,t)})} \nonumber\\ 
    &\;    - \expect \normsq{\nabla f(\w[k]) - \nabla f(\w[k,t]_{m})} \Big) \\
\overset{\cirthree}{\geqslant}  &\; \frac{\eta \uptau}{2} \, \expect \normsq{\nabla f(\w[k])} 
      + \frac{\eta}{2M} \sum_{m=1}^{M} \sum_{t=0}^{\uptau-1} \Big(\expect \normsq{\grad[k,t]} \nonumber\\
    &\;  - L^2 \underbrace{\expect \normsq{\w[k,t]_{m} - \w[k]}}_{T_3} \Big), \label{eq:inner_product_bound}
\end{align}
\end{subequations}
where $\cirone$ follows from \eqref{eq:quant_unbiased}, 
$\cirtwo$ follows from $2 \langle \vecA, \vecB \rangle = \|\vecA\|_2^2 + \|\vecB\|^2_2 - \|\vecA - \vecB\|^2_2$, 
and $\cirthree$ follows from Assumption \ref{assumption:Lsmooth}. 
$T_3$ is bounded as 
\begin{subequations}
\begin{align}
&\; \expect  \normsq{\w[k,t]_{m} - \w[k]} \nonumber \\ 
= &\; \eta^2 \expect \normsq{\sum_{i=0}^{t-1} \left(\stograd[k,i] - \grad[k,i]\right)} 
    + \expect \normsq{\sum_{i=0}^{t-1} \grad[k,i]} \nonumber \\
    &\; + 2  \expect \innerprod{\sum_{i=0}^{t-1} \left(\stograd[k,i] - \grad[k,i] \right)}{\sum_{i=0}^{\uptau-1} \grad[k,i]}  \\
\overset{\cirone}{=} &\;  \eta^2 \expect \normsq{\sum_{i=0}^{t-1} \left(\stograd[k,i] - \grad[k,i]\right)} 
    + \expect \normsq{\sum_{i=0}^{t-1} \grad[k,i]} \\
\overset{\cirtwo}{\leqslant} &\; \eta^2 t\,\left(\sum_{i=0}^{t-1} \expect \normsq{\grad[k,i]} + \sigma_{\xi}^2\right), \label{eq:T3}  
\end{align}
\end{subequations}
where $\cirone$ and $\cirtwo$ follow from Assumption \ref{assumption:sto_grad}.
Plugging \eqref{eq:T3} into \eqref{eq:inner_product_bound} yields 
\begin{subequations}

\begin{align}
    &\; \expect \innprod{\nabla f(\w[k])}{\hde[k]} \geqslant \frac{\eta \uptau}{2} \, \expect \normsq{\nabla f(\w[k])}  \nonumber \\
    &\; + \frac{\eta}{4 M}\left(2-L^{2} \eta^{2} \uptau(\uptau-1)\right) \sum_{m=1}^{M} \sum_{t=0}^{\uptau-1} \expect \normsq{\grad[k,t]} \nonumber \\
    &\; - \frac{L^{2} \eta^{3} \uptau(\uptau-1)}{4} \sigma_{\xi}^{2}. \label{eq:T1}
    \end{align}
\end{subequations}
Next, we consider the bound for $T_2$:
\begin{subequations}
\begin{align}
&\; \expect \normsq{\hde[k]} \\
= &\; \expect \normSQ{\w[k]  - \frac{1}{M} \sum_{m=1}^M \left(\pw[k,\uptau]_{m} + \hres[k]_{m}\right)}  \\
= &\; \expect \normSQ{\w[k]  \! -\! \frac{1}{M} \sum_{m=1}^M \! \left(\pw[k,\uptau]_{m} + \res[k]_{m} + \hres[k]_{m} - \res[k]_{m}  \right) } \\
\overset{\cirone}{=} &\; \expect \normSQ{\w[k] \! - \! \frac{1}{M} \! \sum_{m=1}^M \! \w[k,\uptau]_{m} } 
    \! + \expect \normSQ{ \frac{1}{M} \sum_{m=1}^M \left( \hres[k]_{m} - \res[k]_{m}\right)} \\
\overset{\cirtwo}{\leqslant} &\; \expect \normSQ{\frac{\eta}{M} \sum_{m=1}^M \sum_{t=0}^{\uptau-1} \stograd[k,t] } 
    + \frac{1}{M^2} \normSQ{\sum_{m=1}^{M} \left( \hres[k]_{m} - \res[k]_{m} \right) } \\
\overset{\cirthree}{\leqslant} &\; \expect \normSQ{\frac{\eta}{M} \sum_{m=1}^M \sum_{t=0}^{\uptau-1} \stograd[k,t] } 
    + \frac{q}{M^2} \sum_{m=1}^{M}\normsq{\res[k]_{m}} \\
\overset{\cirfour}{\leqslant} &\; \frac{\eta^2\uptau}{M} \sum_{m=1}^M \sum_{t=0}^{\uptau-1} \expect  \normsq{\grad[k,\uptau]} 
    + \frac{\eta^2\uptau}{M} \sigma_{\xi}^2
    + \frac{q}{M^2} \underbrace{\sum_{m=1}^{M}\normsq{\res[k]_{m}}}_{T_4}.  \label{eq:T2_intermediate}
%
\end{align}
\end{subequations}
where $\cirone$ follows from \eqref{eq:quant_unbiased}, $\cirtwo$ follows from Assumption \ref{assumption:quantization}, $\cirthree$ follows from Assumption \ref{assumption:quantization} and i.i.d. quantization noise, 
and $\cirfour$  follows from Assumption \ref{assumption:sto_grad}. 
Under Assumption \ref{assumption:prediction}, we bound $T_4$ as 
\begin{subequations}
\begin{align}
 \sum_{m=1}^{M}\normsq{\res[k]_{m}} 
& \leqslant  \sum_{m=1}^{M} p \normsq{\de[k]_{m}} \\
& \leqslant  \sum_{m=1}^{M} p \normsq{\eta \sum_{t=0}^{\uptau-1} \stograd[k,t] } \\ 
\leqslant & \; p\,\eta^2\uptau \sum_{m=1}^{M} \sum_{t=0}^{\uptau-1} \normsq{\stograd[k,t]} \\
& \leqslant   p\,\eta^2\uptau \left(\sum_{m=1}^{M} \sum_{t=0}^{\uptau-1} \normsq{\grad[k,t]} + M \sigma_{\xi}^2  \right). \label{eq:T4}
\end{align}
\end{subequations}
Plugging \eqref{eq:T4} into \eqref{eq:T2_intermediate} yields 
\begin{multline}
    \expect \normsq{\hde[k]} \leqslant 
    \frac{\eta^2\uptau}{M} \left( \frac{qp}{M} + 1\right) \sum_{m=1}^{M} \sum_{t=0}^{\uptau-1} \normsq{\grad[k,t]} \\ 
    + \frac{\eta^2\uptau}{M}(qp+1) \sigma_{\xi}^2.  \label{eq:T2}
\end{multline}
Substituting the results in \eqref{eq:T1} and \eqref{eq:T2} into \eqref{eq:original_inequality} yields 
\begin{subequations}
\begin{align}
&\; \mathbb{E} \left[f(\w[k+1]) - f(\w[k])\right] \nonumber \\
\leqslant &\; -\frac{\eta\uptau}{2} \expect \normsq{\nabla f(\w[k])} 
              - \frac{\eta}{4 M} \Big(2-L^{2} \eta^{2} \uptau(\uptau-1) \nonumber \\ &\; -2L\eta\uptau (\frac{qp}{M} + 1) \Big) \sum_{m=1}^{M}  \sum_{t=0}^{\uptau-1} \normsq{\grad[k,t]} \nonumber \\
          &\; + \frac{L \eta^{2} \uptau}{4}\left(\frac{2(qp+1)}{M}+L \eta(\uptau-1) \right) \sigma_{\xi}^{2} \\
\overset{\cirone}{\leqslant} &\; -\frac{\eta\uptau}{2} \expect \normsq{\nabla f(\w[k])}  \nonumber \\
&\; + \frac{L \eta^{2} \uptau}{4}\left(\frac{2(qp+1)}{M} + L \eta(\uptau-1)\right) \sigma_{\xi}^{2},
\end{align}
\end{subequations}
where $\cirone$ follows from the constraint for $\eta$ in \eqref{eq:constraint_lr}. 
Summing up over $K$ communication rounds yields 
\begin{multline}
    \frac{1}{K} \sum_{k=0}^{K-1} \expect \normsq{\nabla f(\w[k])} \leqslant 
    \frac{2\left(f\left(\w[0]\right)-f\left(\w^*\right)\right)}{\eta \uptau K} \\
    + L \eta\left(\frac{qp+1}{M} + \frac{L \eta(\uptau-1)}{2}\right) \sigma_{\xi}^{2}.  
\end{multline}
\end{proof}

\section{Implementation}\label{appendix:implementation}
We used a LeNet model for Fashion-MNIST task and a VGG-7 model~\cite{simonyan2015very} for the CIFAR-10 task. 
For the worker local update, we used the Adam optimizer~\cite{kingma2015adam} and searched the learning rate over the set $\{10^{-4}, 5\times 10^{-4}, 10^{-3}, 5\times 10^{-3}, 10^{-2}\}$.  
For prediction mode 2, we set the step size to $1\times10^{-3}$. 
For prediction mode 3, we set the order $R$ to $3$. 
For prediction mode 4, we use $\beta_1=0.8$ and $\beta_2=0.99$. 
The implementation is available at \url{https://github.com/KAI-YUE/Predictive-Coding-FL}.

\bibliographystyle{IEEEtran}
\bibliography{ref}

\end{document}